\title{On Landau-Zener transitions for dephasing Lindbladians}
\author{Martin Fraas}
\address{%
980 Ohlone Ave.\\
Albany\\
CA 94706\\
USA}
\email{martin.fraas@gmail.com}
\author{Lisa H\"{a}nggli}
\address{%
Theoretische Physik\\
ETH Zurich\\
8093 Zurich\\
Switzerland}
\email{haenggli@itp.phys.ethz.ch}
\newtheorem{theorem}{Theorem}
\newtheorem{lemma}{Lemma}
\newtheorem{proposition}[theorem]{Proposition}
\newcommand{\bra}[1]{\langle #1 \vert}
\newcommand{\ket}[1]{\vert #1 \rangle}
\numberwithin{equation}{section}
\DeclareMathOperator{\tr}{tr}
\DeclareMathOperator{\ran}{ran}
\DeclareMathOperator{\sgn}{sgn}
\begin{document}
\date{\today}

\begin{abstract}
We consider a driven open system whose evolution is described by a Lindbladian. The Lindbladian is assumed to be dephasing and its Hamiltonian part to be given by the Landau-Zener Hamiltonian. We derive a formula for the transition probability which, unlike previous results, extends the Landau-Zener formula to open systems.
\end{abstract}
\subjclass{81S22}
\keywords{Open Quantum System, Lindbladian, Transition Probability, Landau-Zener Tunneling}

\maketitle



\section{Introduction}

A dephasing evolution of an open system maps an initial coherent superposition of energy eigenstates to an incoherent mixture of energy states while preserving their populations. In particular, the ground state, or any other energy eigenstate, of a dephasing open system is stationary. An applied driving force will induce transitions in between these states. In this article we discuss the transition probabilities for the case of adiabatic driving.
More precisely, let $p$ be the probability that a system, prepared in the (isolated) ground state at some initial time and evolved to some final time, has left it by then. The evolution considered here is time-dependent, and is of a dephasing type for each fixed time instant. The driving is slow, which is captured by the time parameter $t=\epsilon^{-1} s$, with $\epsilon\rightarrow 0$ in the adiabatic limit. The goal is to derive a formula for $p$ to leading order in $\epsilon$ when the system is driven across an avoided crossing (Landau-Zener transition).
Transition probabilities in open quantum systems have been discussed in other or more general settings, but also with different focus. We mention \cite{FV63,AL87} and \cite{SS93,PS03,SG91}, as general references and references to Landau-Zener transitions respectively. For further references see \cite{AFGG11}; more recent works include \cite{XPV14,AB14}.
In this paper we assume the dynamics of the open system to be Markovian, i.e.\ the environment to be effectively memoryless. The validity of this approximation is for example rigorously proven in the case of weak coupling \cite{EBD74}. In addition, we assume that the Hilbert space of the system is finite dimensional. The evolution of the state $\rho$ can then be described \cite{GL76,GKS76} by a dynamical semigroup generated by an operator $\mathcal{L}$ of Lindblad form, 
\begin{equation}\label{Lindblad_op}
\mathcal{L}\rho=-i[H,\rho]+\sum_{\alpha\in I}\Gamma_{\alpha}\,\rho\,\Gamma_{\alpha}^*-\frac{1}{2}(\Gamma_{\alpha}^*\Gamma_{\alpha}\,\rho+\rho\,\Gamma_{\alpha}^*\Gamma_{\alpha}),
\end{equation}
see Section \ref{subsection_Lindblad_dynamics}. We consider a special case of such generators, or Lindbladians, the so-called dephasing Lindbladians, characterized by jump operators $\Gamma_\alpha$ of the form $\Gamma_{\alpha}=f_{\alpha}(H)$ for some functions $f_{\alpha}$, see Section \ref{subsubsection_Dephasing_Lindbladians}. Dephasing Lindbladians are of interest in the context of transition probabilities since the space of stationary states is not just $1$-dimensional, as opposed to the generic case. Furthermore we restrict our attention to two-level systems, since Landau-Zener transitions are expected to be dominant and to occur between just two levels at a time. More precisely, we consider the dynamics of a two-level system depending on time through the above parameter $s$. As we will show in Section \ref{subsection_two_level_dephasing_lindbladians}, the most general form of a dephasing two-level Lindbladian is
\begin{equation*}
\mathcal{L}_s\rho(s)=-i[H_s,\rho(s)]-\frac{\gamma_s}{2}[\sqrt{H_s},[\sqrt{H_s},\rho(s)]],
\end{equation*}
where $\gamma_s\geq 0$, $H_s$ is a Hamiltonian acting on $\mathbb{C}^2$, and $\sqrt{H_s}$ is shorthand for $\sgn (H_s)\sqrt{|H_s|}$. We assume $H_s$ and $\gamma_s$ to be smooth in $s$. Moreover, we focus on Hamiltonians $H_s$ going through an avoided crossing (see Figure~\ref{fig_avoided_crossing}) as a function of $s$. Near that point they can be approximatively described by the Landau-Zener Hamiltonian
\begin{equation*}
H_s=\frac{1}{2}\begin{pmatrix}
s &g \\ g& -s
\end{pmatrix}, \quad (g\in\mathbb{R},\, g>0),
\end{equation*}
with eigenvalues $\pm e_{s}=\pm 1/2 \sqrt{s^2+g^2}$ and eigenprojections denoted by $P_s^{\pm}$. 
\begin{figure}[!ht]
\centering
\input{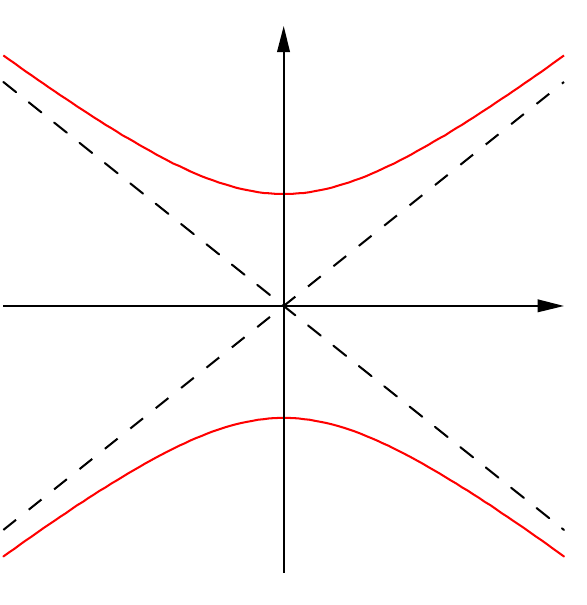_t}
\caption{Energy levels $\pm e_s$ of the Landau-Zener Hamiltonian going through an avoided crossing at $s=0$.}
\label{fig_avoided_crossing}
\end{figure}
The propagator of the Lindblad equation 
\begin{equation*}
\epsilon\dot{\rho}(s)=\mathcal{L}_s\rho(s)
\end{equation*}
is the two-parameter semigroup satisfying 
\begin{gather*}
\epsilon\frac{\partial}{\partial s}U_{\epsilon}(s,s')=\mathcal{L}_sU_{\epsilon}(s,s'),\quad U_{\epsilon}(s',s')=\mathbb{1},\quad (s\geq s').
\end{gather*}
The \emph{\bf{transition probability}} out of the ground state $P_s^{-}$ is
\begin{equation*}
p=\lim_{T\rightarrow\infty}\tr(P_T^{+}\,U_{\epsilon}(T,-T)P_{-T}^{-}).
\end{equation*}
For $\gamma_s\equiv 0$, where the dynamics is Hamiltonian, Landau and Zener \cite{LL32,CZ32} independently proved that 
\begin{equation}\label{LZ_tunneling}
p=\exp\bigl(-\pi\frac{g^2}{2\epsilon}\bigr).
\end{equation}
The transition probability within a finite time interval $(-T,T)$ has also been discussed \cite{MVB90, BT06}. The behaviour is oscillatory in $T$, at least if transitions are defined between instantaneous eigenstates, as done above. A monotone increase in $T$ is obtained if instead of the eigenbasis a modified (superadiabatic) basis is used to define transitions.
 
For general $\gamma_s\geq 0$, and the adiabatic evolution taking place during a finite time interval $[s_0,s_1]$, it was found \cite{AFGG11} that
\begin{equation}\label{LZ_tunneling_afgg}
p(\epsilon,\gamma)=\epsilon \int_{s_0}^{s_1}\,\frac{\gamma_{\tau}}{1+\gamma_{\tau}^2}\frac{\tr(P_{\tau}^{-}(\dot{P}_{\tau}^{+})^2P_{\tau}^{-})}{e_{\tau}}\mathrm{d}\tau+O(\epsilon^2)
\end{equation}
as $\epsilon\to 0$. The leading term, which is of first order in $\epsilon$, vanishes for $\sup_s\gamma_s\rightarrow 0$, yet one does not manifestly recover the Landau-Zener formula (\ref{LZ_tunneling}) in that limit. The expression accounts for the transition probability in terms of transitions between instantaneous eigenstates occurring at specific times within the interval, resulting in a monotone increase at varying rates. Actually, the rates are determined by the velocity of $P_{\tau}^{+}$ in the Fubini-Study metric, which is given here by
\begin{equation}\label{explicit_numerator}
\tr(P_{\tau}^{-}(\dot{P}_{\tau}^{+})^2P_{\tau}^{-})=\frac{g^2}{64\,e_{\tau}^4}.
\end{equation}
The result presented in this paper interpolates between the two results. More precisely, we shall show that for $\gamma_s$, $\dot{\gamma}_s$, and $\ddot{\gamma}_s$ bounded, we have
\begin{equation}\label{theorem_L_Z_generalized_formula}
\begin{aligned}
p(\epsilon,\gamma)=&\exp\bigl(-\pi \frac{g^2}{2\epsilon}\bigr)+\epsilon \int_{-\infty}^{\infty}\,\frac{\gamma_{\tau}}{1+\gamma_{\tau}^2}\frac{\tr(P_{\tau}^{-}(\dot{P}_{\tau}^{+})^2P_{\tau}^{-})}{e_{\tau}}\mathrm{d}\tau+O(\gamma\epsilon^2),
\end{aligned}
\end{equation}
where $\gamma\coloneqq \sup_{\tau}\gamma_{\tau}$. We remark that the asymptotics is uniform in $\epsilon, \gamma$, but not in $g$. In fact, by scaling $p(\epsilon,\gamma)$ actually depends just on $ g^2/\epsilon, \gamma$. It is worth noticing that the expanded part of the general expression (\ref{theorem_L_Z_generalized_formula}) is simply the sum of its two limiting cases (\ref{LZ_tunneling}-\ref{LZ_tunneling_afgg}) for which transitions are a purely coherent and a fully incoherent process, respectively. 

The separation of coherent and incoherent contributions to the tunneling can be realized for any dephasing Lindbladian (\ref{Lindblad_op}), see (\ref{expansion_p}). Our method allows to derive a formula akin to Eq.~(\ref{theorem_L_Z_generalized_formula}) in a more general setting; such extensions are briefly discussed in Section~\ref{sec:extensions}. 


\section{Preliminaries}\label{section_preliminaries}

In this preliminary part we explain the concepts of Linbladians and adiabatic evolution. In particular, the special case of a dephasing Lindbladian is introduced. Moreover, transition probabilities and the adiabatic theorem in terms of dephasing Lindbladians are stated. 
%
%

\begin{subsection}{Lindblad dynamics}\label{subsection_Lindblad_dynamics}
A (super)\footnote{Super-operators are operators acting on bounded operators on the Hilbert space. They will be denoted by calligraphic characters.} operator $\mathcal{L}$ of the form (\ref{Lindblad_op}) with $H=H^*$, $\Gamma_{\alpha}$ arbitrary operators, and $I$ a finite index set, is called a Lindbladian. We write 
\begin{equation}
\mathcal{L}=(H,\Gamma)
\end{equation} 
for short, where $\Gamma$ represents the set of all $\Gamma_{\alpha}$'s. Let us assume that $\dim \mathcal{H}<\infty$, where $\mathcal{H}$ is the Hilbert space associated to the system $\mathcal{S}$. Then if the open system is Markovian, its master equation describing the evolution of a state $\rho(t)$ is of Lindblad form:
\begin{equation}\label{master_eq}
\frac{\mathrm{d}}{\mathrm{d}t}\rho(t)=\mathcal{L}\rho(t).
\end{equation}
We remark that $\mathcal{L}$ is invariant under so-called gauge transformations
\begin{equation}\label{gauge_transformations}
\begin{gathered}
H\mapsto H+e\mathbb{1}-i\sum_{\alpha}(c_{\alpha}^*\Gamma_{\alpha}-c_{\alpha}\Gamma_{\alpha}^*),\quad \Gamma_{\alpha}\mapsto \Gamma_{\alpha}+c_{\alpha}\mathbb{1},\quad (c_{\alpha}\in\mathbb{C}, e\in\mathbb{R}),
\end{gathered}
\end{equation}
as well as
\begin{gather*}
\Gamma\mapsto \mathcal{U}\Gamma, \quad (\mathcal{U}\Gamma)_{\alpha}=\sum_{\beta}\mathcal{U}_{\alpha\beta}\Gamma_{\beta},\quad \mathcal{U}^{-1}=\mathcal{U}^*.
\end{gather*}
Stationary states $\rho$ are elements $\rho\in \ker\mathcal{L}$ by (\ref{master_eq}). The (super) projections on the kernel and the range of $\mathcal{L}$, in the direction of the other, are denoted by $\mathcal{P}$ and $\mathcal{Q}$.
There are several norms which can be associated to operators. We use the same notation for the norm when talking of vectors in a Banach space and of associated bounded operators. In particular we do so for the norm $\Vert \cdot \Vert_1$ (resp. $\Vert \cdot \Vert$) of the space $\mathcal{J}_1(\mathcal{H})$ of trace class operators (resp. $\mathcal{B}(\mathcal{H})$ of bounded operators).
\begin{subsubsection}{Dephasing Lindbladians}\label{subsubsection_Dephasing_Lindbladians}
A Lindbladian $\mathcal{L}=(H,\Gamma)$ is called \emph{\textbf{dephasing}} if
\begin{equation}\label{def_dephasing_Lind}
\Gamma_{\alpha}=f_{\alpha}(H)
\end{equation}
for bounded Borel functions $f_{\alpha}$. This implies $\mathcal{L}P=0$ whenever $[H,P]=0$ and in particular for any spectral projection $P$. Since $H$ is acting on a Hilbert space of (finite) dimension $n$, the stationary states of $\mathcal{L}$ are those of $[H,\cdot]$; in fact the latter statement is in this case equivalent to (\ref{def_dephasing_Lind}), see \cite{AFGG12}. Put differently, the stationary states of $\mathcal{L}$ are the incoherent superpositions of eigenprojections of $H$. The projections $\mathcal{P}$ and $\mathcal{Q}$ can be written as
\begin{equation}\label{projection_sum}
\mathcal{P}\rho=\sum_{j} P^{j}\rho P^{j},\quad \mathcal{Q}\rho=\sum_{j\not=k} P^{j}\rho P^{k},
\end{equation}
where $P^{j}$ are the projections onto the eigenspaces of $H$ \cite{AFG13}. 
If $H$ has simple eigenvalues $e^{0},\dots,e^{n-1}$ with eigenvectors $\psi^i$, the operators $E^{ij}\coloneqq \ket{\psi^i}\bra{\psi^j}$ form a basis of $\mathcal{B}(\mathcal{H})$. In particular, this basis is orthonormal once that space is endowed with the Hilbert-Schmidt inner product.
In the case of a time-dependent operator $H_t$, the above conclusions hold pointwise in $t$.
\end{subsubsection}
\end{subsection}

\begin{subsection}{Adiabatic evolution}\label{subsection_adiabatic_evolution}
The Lindbladian $\mathcal{L}$ may depend on time $t$ through some parameter $s=\epsilon t$, $\epsilon>0$. More precisely, the operators $H$ and $\Gamma_{\alpha}$ may depend on $s$, and thus define a Lindbladian $\mathcal{L}_s$ for each fixed $s$ through (\ref{Lindblad_op})\footnote{The time dependence of objects is denoted by a subscript whenever the dependence is parametric rather than dynamical.}. The master equation (\ref{master_eq}) is accordingly modified to
\begin{equation}\label{adiabatic_master_eq}
\epsilon\frac{\mathrm{d}}{\mathrm{d}s}\rho(s)=\mathcal{L}_s\rho(s).
\end{equation}
We speak about the adiabatic limit when $\epsilon\rightarrow 0$. 
Let now $H_s$ and $\Gamma_{\alpha,s}$ be smooth functions of $s\in \mathbb{R}$. Consequently, $\mathcal{L}_s$ and $\mathcal{L}_s^*$ are smooth as well, $\mathcal{L}_s^*$ being the dual operator with respect to the duality $\mathcal{B}(\mathcal{H})\cong (\mathcal{J}_1(\mathcal{H}))^*$. This is enough to write $\rho(s)=\mathcal{U}_{\epsilon}(s,s')\rho(s')$, with $\mathcal{U}_{\epsilon}(s,s')$ a two-parameter semigroup satisfying the ordinary differential equation
\begin{gather}\label{evolution_eq_propagator}
\epsilon\frac{\partial}{\partial s}\mathcal{U}_{\epsilon}(s,s')=\mathcal{L}_s\mathcal{U}_{\epsilon}(s,s'),\quad \mathcal{U}_{\epsilon}(s',s')=\mathbb{1},\quad (s\geq s').
\end{gather}
We call $\mathcal{U}_{\epsilon}(s,s')$ the propagator corresponding to the Lindbladian $\mathcal{L}_s$. It is a completely positive, trace-preserving (CPTP) map acting on trace class operators, thus $\Vert \mathcal{U}_{\epsilon}(s,s')\Vert_1 =1$ (\cite{BR79}, Cor. 3.6.2).
\begin{subsubsection}{Transition probabilities for dephasing Lindbladians}\label{subsubsection_tun_deph_Lindb}
The projections $P^i$ onto the eigenspaces of $H$ are stationary states of a dephasing Lindbladian $\mathcal{L}$, though not so if it depends on time $s$. We assume these eigenspaces to have constant dimension for all $s$, and that it equals $1$ for the lowest eigenvalue; the corresponding projection is called the ground state and denoted by $P^0$. Then $P_s^0$ is likewise smooth and so is its complementary projection $P_s^{0,\perp}=\mathbb{1}-P_s^0$. The transition probability $p(\epsilon)$ out of the ground state is given by\footnote{Occasionally we may write $p(\epsilon,\boldsymbol{\alpha})$, $\boldsymbol{\alpha}$ a multiindex of quantities of interest in the present situation.}
\begin{equation}
\begin{gathered}
p(\epsilon)=\lim_{T\rightarrow\infty}p(\epsilon,-T,T),\\
p(\epsilon,T,-T)=\tr(P_T^{0,\perp}\,\mathcal{U}_{\epsilon}(T,-T)P_{-T}^{0}).
\end{gathered}
\end{equation}
We also observe that the eigenvalue zero of $\mathcal{L}_s$ has constant degeneracy by the spectral assumptions just made. In particular $\mathcal{L}_s$ has a gap, meaning that the eigenvalue zero is uniformly isolated in $s$, and $\mathcal{P}_s$ is likewise smooth.
\end{subsubsection}
\begin{subsubsection}{Parallel transport}
In order to state the adiabatic theorem, we first recall the concept of parallel transport. As before let $\mathcal{P}_s$ be the projection onto $\ker\mathcal{L}_s$, and observe that it is a CPTP map, as seen from (\ref{projection_sum}) in the dephasing case, but actually true for general Lindbladians \cite{AFGG12}. We recall that $\ran \mathcal{P}_s = \ker\mathcal{L}_s$ is the space of (instantaneous) stationary states at time $s$, and it pays to call the bundle over $s\in \mathbb{R}$ with fiber $\ker\mathcal{L}_s$ the stationary manifold. 
We will show below that to leading order in $\epsilon$, the evolution of the stationary manifold, generated by (\ref{adiabatic_master_eq}), is given by the action of a parallel transport, $\mathcal{T}(s,s')$. Parallel transport is the solution to the evolution equation
\begin{equation}\label{eq_par_tr}
\frac{\partial}{\partial s} \mathcal{T}(s,s') = [\dot{\mathcal{P}}_s, \mathcal{P}_s] \mathcal{T}(s,s'), \quad \mathcal{T}(s',s') = \mathbb{1}.
\end{equation}
The basic properties of parallel transport are recalled in the following proposition, which abstracts from $\ran \mathcal{P}_s = \ker\mathcal{L}_s$ and hence from $\ran \mathcal{P}_s^* = \ker\mathcal{L}_s^*$. 
\begin{proposition}\label{proposition}
A parallel transport has an intertwining property,
\begin{equation}\label{eq_intertwining_property}
\mathcal{T}(s,s') \mathcal{P}_{s'} = \mathcal{P}_s \mathcal{T}(s,s').
\end{equation}
Furthermore, $\mathcal{T}(s,s') \mathcal{P}_{s'} $ is a CPTP map and $\mathcal{T}(s,s')$ maps $\ran \mathcal{P}_{s'}$ isometrically to $\ran \mathcal{P}_s$. Its dual, $\mathcal{T}(s,s')^*$, maps $\ran \mathcal{P}_s^*$ isometrically to $\ran \mathcal{P}_{s'}^*$. 
\end{proposition}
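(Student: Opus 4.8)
\emph{Proof idea.} The plan is to derive everything from the defining equation (\ref{eq_par_tr}) together with the facts recalled just above: each $\mathcal{P}_s$ is a CPTP idempotent, and $\mathcal{L}_s$, hence $\mathcal{P}_s$, is smooth with a uniform gap on compact $s$-intervals, so (\ref{eq_par_tr}) has a unique invertible solution, with $\mathcal{T}(s',s)$ its inverse and $\mathcal{T}(s,s'')=\mathcal{T}(s,s')\mathcal{T}(s',s'')$. For the intertwining property (\ref{eq_intertwining_property}) I would check that $\mathcal{T}(s,s')\mathcal{P}_{s'}$ and $\mathcal{P}_s\mathcal{T}(s,s')$ solve the same linear initial value problem in $s$: both equal $\mathcal{P}_{s'}$ at $s=s'$, differentiating the first and using (\ref{eq_par_tr}) gives $[\dot{\mathcal{P}}_s,\mathcal{P}_s]\,\mathcal{T}(s,s')\mathcal{P}_{s'}$, while the product rule applied to the second produces an extra term $\dot{\mathcal{P}}_s\mathcal{T}(s,s')$, so the equality reduces to the identity $\dot{\mathcal{P}}_s=[[\dot{\mathcal{P}}_s,\mathcal{P}_s],\mathcal{P}_s]$. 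That identity is the usual consequence of $\mathcal{P}_s^2=\mathcal{P}_s$: differentiation gives $\dot{\mathcal{P}}_s\mathcal{P}_s+\mathcal{P}_s\dot{\mathcal{P}}_s=\dot{\mathcal{P}}_s$, whence $\mathcal{P}_s\dot{\mathcal{P}}_s\mathcal{P}_s=0$, and expanding the double commutator with these two relations yields the claim. Uniqueness of solutions of linear ODEs then gives (\ref{eq_intertwining_property}); exchanging $s\leftrightarrow s'$ shows in addition that $\mathcal{T}(s,s')\mathcal{P}_{s'}$ and $\mathcal{T}(s',s)\mathcal{P}_s$ are mutually inverse as maps between $\ran\mathcal{P}_{s'}$ and $\ran\mathcal{P}_s$.

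Trace preservation of $\mathcal{T}(s,s')\mathcal{P}_{s'}$ I would read off from the dual equation: since $\mathcal{P}_s^*\mathbb{1}=\mathbb{1}$ for every $s$, one has $\dot{\mathcal{P}}_s^*\mathbb{1}=0$, hence $[\dot{\mathcal{P}}_s,\mathcal{P}_s]^*\mathbb{1}=0$, so $\partial_s(\mathcal{T}(s,s')^*\mathbb{1})=0$ and therefore $\mathcal{T}(s,s')^*\mathbb{1}=\mathbb{1}$, whence $(\mathcal{T}(s,s')\mathcal{P}_{s'})^*\mathbb{1}=\mathcal{P}_{s'}^*\mathbb{1}=\mathbb{1}$. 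Complete positivity is the one assertion that is not a formal manipulation, and here I would use a product-formula argument in the spirit of Kato: for a partition $s'=t_0<t_1<\dots<t_N=s$ of mesh $\delta$, idempotency lets one write both $\mathcal{T}(s,s')\mathcal{P}_{s'}$ and the ordered product $\mathcal{P}_{t_N}\mathcal{P}_{t_{N-1}}\cdots\mathcal{P}_{t_0}$ as products of $N$ factors whose $k$-th factors, $\mathcal{T}(t_k,t_{k-1})\mathcal{P}_{t_{k-1}}$ and $\mathcal{P}_{t_k}\mathcal{P}_{t_{k-1}}$, differ by $O(\delta^2)$ in trace norm --- checked by expanding both to first order in $\delta$ and using $\mathcal{P}_t\dot{\mathcal{P}}_t\mathcal{P}_t=0$ once more. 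Because all partial products are $\Vert\cdot\Vert_1$-contractions (each $\Vert\mathcal{P}_s\Vert_1\le 1$), telescoping turns these $N$ errors into an $O(1/N)$ total, so the ordered products converge in trace norm to $\mathcal{T}(s,s')\mathcal{P}_{s'}$; since every finite product is CPTP and the CPTP maps form a norm-closed set, the limit is CP. This product-formula convergence is where I expect the real work to be, and also where smoothness and compactness of the $s$-interval enter, through uniform bounds on the first two derivatives of $\mathcal{P}_s$.

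The isometry statements then follow softly. $\mathcal{T}(s,s')\mathcal{P}_{s'}$ and its inverse $\mathcal{T}(s',s)\mathcal{P}_s$ are CPTP, hence $\Vert\cdot\Vert_1$-contractions on $\mathcal{J}_1(\mathcal{H})$; a contraction whose inverse is a contraction is an isometry, so $\mathcal{T}(s,s')$ restricts to a trace-norm isometry $\ran\mathcal{P}_{s'}\to\ran\mathcal{P}_s$. Dually, $(\mathcal{T}(s,s')\mathcal{P}_{s'})^*=\mathcal{P}_{s'}^*\mathcal{T}(s,s')^*=\mathcal{T}(s,s')^*\mathcal{P}_s^*$ --- the last step being the adjoint of (\ref{eq_intertwining_property}) --- is unital and positive, hence an operator-norm contraction on $\mathcal{B}(\mathcal{H})$, and likewise for its inverse; therefore $\mathcal{T}(s,s')^*$ restricts to an operator-norm isometry $\ran\mathcal{P}_s^*\to\ran\mathcal{P}_{s'}^*$.
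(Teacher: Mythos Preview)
Your argument is correct and follows essentially the same route as the paper: ODE uniqueness for the intertwining relation, a Kato-type product formula $\mathcal{T}(s,s')\mathcal{P}_{s'}=\lim_N\mathcal{P}_{t_N}\cdots\mathcal{P}_{t_0}$ for the CPTP claim, and the ``contraction whose inverse is a contraction is an isometry'' step for the isometry on $\ran\mathcal{P}_{s'}$. Two cosmetic differences are worth noting: the paper writes the product formula for $\mathcal{T}(s,s')$ itself via $\mathcal{P}_{s_{i+1}}\mathcal{P}_{s_i}+\mathcal{Q}_{s_{i+1}}\mathcal{Q}_{s_i}=\mathbb{1}+[\dot{\mathcal{P}}_{s_i},\mathcal{P}_{s_i}]\,\delta+o(\delta)$ and then restricts, whereas you telescope directly on $\mathcal{T}(s,s')\mathcal{P}_{s'}$ and handle trace preservation separately through $\mathcal{T}(s,s')^*\mathbb{1}=\mathbb{1}$; and for the dual isometry the paper observes that $\mathcal{T}(s,s')^*=\mathcal{T}^*(s',s)$, the parallel transport associated to $\mathcal{P}_s^*$, so the already-proved statement applies verbatim, whereas you argue directly that unital positive maps are $\Vert\cdot\Vert$-contractions. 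Both variants are standard and equally short.
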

\begin{proof}
The intertwining property follows because both sides satisfy the differential equation (\ref{eq_par_tr}) with the same initial condition. Let us denote the inverse of $\mathcal{T}(s,s')$ by $\mathcal{T}(s',s)$. To see that $\mathcal{T}(s,s')\mathcal{P}_{s'}$ is a CPTP map, first note that $\mathcal{T}(s,s')\mathcal{T}(s',s'')=\mathcal{T}(s,s'')$ by (\ref{eq_par_tr}). Furthermore,
\begin{equation*}
\mathcal{P}_s\mathcal{P}_{s'}+\mathcal{Q}_s\mathcal{Q}_{s'}=\mathbb{1}+[\dot{\mathcal{P}}_{s'},\mathcal{P}_{s'}](s-s')+o(|s-s'|),\quad (s\rightarrow s')
\end{equation*}
by Taylor expansion. Here $\mathcal{Q}_s=\mathbb{1}-\mathcal{P}_s$. Thus we can write 
\begin{align}
\mathcal{T}(s,s')=&\lim_{N\rightarrow \infty}\prod_{i=0}^{N-1}(\mathcal{P}_{s_{i+1}}\mathcal{P}_{s_i}+\mathcal{Q}_{s_{i+1}}\mathcal{Q}_{s_i})\nonumber\\
=&\lim_{N\rightarrow \infty}(\prod_{i=0}^{N}\mathcal{P}_{s_{i}}+\prod_{i=0}^{N}\mathcal{Q}_{s_i}),\label{par_tr_as_product}
\end{align}
where $s'=s_0\leq s_1\leq \dots \leq s_N=s$ is a partition of $[s',s]$ into intervals of length $|s_{i+1}-s_i|=N^{-1}|s-s'|$ and $\prod_{i=0}^N A_i\coloneqq A_N\dots A_0$.
From (\ref{par_tr_as_product}) we see that $\mathcal{T}(s,s') \mathcal{P}_{s'} $ indeed is a CPTP map. Thus $\Vert \mathcal{T}(s,s') \mathcal{P}_{s'}\Vert_1 =1$ (\cite{BR79}, Cor. 3.6.2).
Consequently, for $\rho \in \ran\mathcal{P}_{s'}$,
\begin{equation*}
\Vert \mathcal{T}(s,s')\rho \Vert_1\leq \Vert \rho \Vert_1 =\Vert \mathcal{T}(s',s)\mathcal{T}(s,s')\rho \Vert_1\leq \Vert \mathcal{T}(s,s')\rho \Vert_1,
\end{equation*}
establishing the isometry property. The properties of the dual map follow from 
\begin{equation}\label{dual_par_tr}
\mathcal{T}(s,s')^*=\mathcal{T}^*(s',s),
\end{equation}
where $\mathcal{T}^*(s,s')$ is the parallel transport corresponding to the dual projection $\mathcal{P}_s^*$. Equation (\ref{dual_par_tr}) in turn follows from (\ref{eq_par_tr}) and 
\begin{equation*}
\frac{\partial}{\partial s} \mathcal{T}(s',s) =-  \mathcal{T}(s',s)[\dot{\mathcal{P}}_s, \mathcal{P}_s].
\end{equation*}
\end{proof}
In the case of $\mathcal{P}_s$ corresponding to some Lindbladian $\mathcal{L}_s$, note that if the latter is dephasing, then $\mathcal{P}_s =\mathcal{P}_s^*$ as operators on $\mathcal{B}(\mathcal{H})$, and in particular $\ran \mathcal{P}_s = \ran \mathcal{P}_s^*=\ker[H,\cdot]$.
\end{subsubsection}
\begin{subsubsection}{The adiabatic theorem for Lindbladians}
Let the family $P_s\in \ker\mathcal{L}_s$ be compatible with parallel transport,
\begin{equation}\label{par_tr_kernel}
\mathcal{T}(s,s') P_{s'} = P_{s}
\end{equation}
in line with Proposition \ref{proposition}. We then call $P_s$ a family of parallel transported stationary states. If $\mathcal{L}_s$ is dephasing, the parallel transport $\mathcal{T}$ is associated to $\mathcal{P}_s$ seen in (\ref{projection_sum}), and the spectral projections $P_s=P_s^j$ of $H_s$ are examples of such families. In fact both sides of (\ref{par_tr_kernel}) then satisfy the same differential equation (\ref{eq_par_tr}). Thus also fixed convex combinations of the $P_s^j$ are families of parallel transported stationary states. However, such families may also occur for general Lindbladians $\mathcal{L}_s$ which have a gap, i.e.\ for which the eigenvalue zero is uniformly isolated in $s$, and thus the projection $\mathcal{P}_s$ onto $\ker\mathcal{L}_s$ is likewise smooth. In fact $P_s\coloneqq \mathcal{T}(s,s')P_{s'}\in\ker\mathcal{L}_s$ is such a family by construction for any $P_{s'}\in\ker\mathcal{L}_{s'}$; in case the spaces are $1$-dimensional, the states $P_s$ are unique.
Based on the notion of parallel transport we can now formulate the adiabatic theorem. It provides solutions of (\ref{adiabatic_master_eq}) that remain close to the stationary manifold to first order in $\epsilon$. The statement here is a special case of Theorem~6 in \cite{AFGG12} which gives an expansion to an arbitrary order. For completeness we give a proof of the version stated here. 
\begin{theorem}[Adiabatic Theorem]\label{adiabatic_theorem}
Let $\mathcal{L}_s$ be a Lindbladian with a gap, $\mathcal{P}_s$ the projection onto $\ker\mathcal{L}_s$, and $P_s$ a family of parallel transported stationary states of $\mathcal{L}_s$. Then the driven Lindblad equation $\epsilon \dot{\rho}(s) = \mathcal{L}_s \rho(s)$ admits a solution of the form
\begin{equation}\label{solution_ad_th}
\begin{gathered}
\rho(s) = P_s + \epsilon\, a_{s,s'} + \epsilon^2 r(s,s'),\\
a_{s,s'}\coloneqq \mathcal{L}^{-1}_s \dot{P}_s + \int_{s'}^s \mathcal{T}(s,\tau)\dot{\mathcal{P}}_{\tau} \mathcal{L}_{\tau}^{-1} \dot{P}_\tau \,\mathrm{d} \tau,\quad (s\geq s').
\end{gathered}
\end{equation}
The remainder term $r(s,s')$ has an expression 
\begin{equation}\label{rest_term_ad_th}
r(s,s')\coloneqq b_{s,s'}-\mathcal{U}_{\epsilon}(s,s')b_{s',s'}-\int_{s'}^s \mathcal{U}_{\epsilon}(s,\tau)\dot{b}_{\tau,s'}\,\mathrm{d}\tau,
\end{equation}
\begin{equation}\label{def_b_j}
b_{s,s'}\coloneqq \mathcal{L}_s^{-1}\dot{\mathcal{P}}_s\int_{s'}^s \mathcal{T}(s,\tau)\dot{\mathcal{P}}_{\tau}\mathcal{L}_{\tau}^{-1}\dot{P}_{\tau}\,\mathrm{d}\tau+\mathcal{L}_s^{-1}(\mathbb{1}-\mathcal{P}_s)\frac{\mathrm{d}}{\mathrm{d}s}(\mathcal{L}_s^{-1}\dot{P}_s),
\end{equation}
and is uniformly bounded in $\epsilon$ for $s',s$ finite.
\end{theorem}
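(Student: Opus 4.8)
The plan is to verify directly that the ansatz (\ref{solution_ad_th}) solves the driven Lindblad equation up to a controllable $\epsilon^2$ remainder, and then to read off (\ref{rest_term_ad_th})--(\ref{def_b_j}) and bound the remainder using the CPTP (hence contractive in $\Vert\cdot\Vert_1$) nature of $\mathcal{U}_\epsilon$ and $\mathcal{T}$. The starting point is the observation that, since $\mathcal{L}_s$ is invertible on $\ran\mathcal{Q}_s=\ran(\mathbb{1}-\mathcal{P}_s)$ and $\dot P_s\in\ran\mathcal{Q}_s$ (differentiate $\mathcal{P}_s P_s=P_s$ and use that $P_s$ is in the kernel, so that $\mathcal{P}_s\dot P_s$ together with $\dot{\mathcal P}_s P_s$ reproduces $\dot P_s$; one checks $\mathcal{Q}_s\dot P_s=\dot P_s-\mathcal{P}_s\dot P_s$ picks out the off-diagonal part), the object $\mathcal{L}_s^{-1}\dot P_s$ is well defined. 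The parallel-transport identity $\dot{\mathcal P}_s=[\dot{\mathcal P}_s,\mathcal P_s]\mathcal P_s+\mathcal P_s[\dot{\mathcal P}_s,\mathcal P_s]$ and the intertwining property (\ref{eq_intertwining_property}) are the two structural facts that make the telescoping work.

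Concretely, I would set $\rho(s)=P_s+\epsilon a_{s,s'}+\epsilon^2 r(s,s')$ and compute $\epsilon\dot\rho(s)-\mathcal{L}_s\rho(s)$. The term $-\mathcal{L}_s P_s$ vanishes since $P_s\in\ker\mathcal{L}_s$. At order $\epsilon$ one needs $\mathcal{L}_s a_{s,s'}=\dot P_s$; by construction $\mathcal{L}_s\mathcal{L}_s^{-1}\dot P_s=\dot P_s$ (the integral term lies in $\ran\mathcal P_s$, since $\mathcal T(s,\tau)$ preserves $\ran\mathcal P$, hence is killed by $\mathcal{L}_s$), so the order-$\epsilon$ terms cancel. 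The leftover is $\epsilon^2(\dot a_{s,s'}-\mathcal{L}_s r(s,s'))=:\epsilon^2 R_{s,s'}$ where, differentiating $a_{s,s'}$ and using the parallel-transport ODE (\ref{eq_par_tr}) for $\mathcal T$ and the decomposition of $\dot{\mathcal P}_s$ above, the explicit form of $\dot a_{s,s'}$ comes out to be exactly $\frac{\mathrm d}{\mathrm ds}(\mathcal{L}_s^{-1}\dot P_s)+\dot{\mathcal P}_s\int_{s'}^s\mathcal T(s,\tau)\dot{\mathcal P}_\tau\mathcal{L}_\tau^{-1}\dot P_\tau\,\mathrm d\tau$, i.e.\ $\mathcal{L}_s b_{s,s'}$ plus a term in $\ran\mathcal P_s$; this is precisely what motivates the definition (\ref{def_b_j}) of $b_{s,s'}$. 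Then $r(s,s')$ defined by (\ref{rest_term_ad_th}) satisfies $\epsilon\dot r=\mathcal{L}_s r+$(the source), by the Duhamel/variation-of-constants representation associated with $\mathcal{U}_\epsilon$, and one checks the initial condition $\rho(s')=P_{s'}+\epsilon a_{s',s'}+\epsilon^2 r(s',s')$ with $a_{s',s'}=\mathcal{L}_{s'}^{-1}\dot P_{s'}$ and $r(s',s')=0$, confirming that the right-hand side of (\ref{solution_ad_th}) genuinely solves the equation with that initial datum.

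Finally, for the boundedness claim: $b_{s,s'}$ is manifestly bounded in $\mathcal B(\mathcal H)$ uniformly in $\epsilon$ for $s,s'$ in a fixed compact set, because it is built from $\mathcal{L}_s^{-1}$ (bounded by the gap hypothesis), the smooth families $\mathcal P_s,\dot{\mathcal P}_s,\ddot{\mathcal P}_s, \dot P_s,\ddot P_s$, and $\mathcal T(s,\tau)$ which is a contraction on $\ran\mathcal P$ and has an operator-norm bound on all of $\mathcal B(\mathcal H)$ controlled by $\int\Vert[\dot{\mathcal P}_\tau,\mathcal P_\tau]\Vert\,\mathrm d\tau$ via Gr\"onwall. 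Then in (\ref{rest_term_ad_th}) the three pieces are bounded using $\Vert\mathcal{U}_\epsilon(s,\tau)\Vert_1=1$ (the CPTP property cited from (\cite{BR79}, Cor.~3.6.2)) together with the equivalence of norms on the finite-dimensional space, giving $\Vert r(s,s')\Vert\leq \Vert b_{s,s'}\Vert+\Vert b_{s',s'}\Vert+\int_{s'}^s\Vert\dot b_{\tau,s'}\Vert\,\mathrm d\tau$, uniformly in $\epsilon$. The main obstacle is the bookkeeping in the order-$\epsilon^2$ computation: one must carefully differentiate the integral in $a_{s,s'}$ (boundary term plus $\partial_s\mathcal T(s,\tau)$ inside), recognize which contributions land in $\ker\mathcal{L}_s$ (and are therefore irrelevant, being absorbed into the parallel-transported part or killed by $\mathcal{L}_s$) versus $\ran\mathcal{Q}_s$, and match the $\ran\mathcal{Q}_s$ part term-by-term against $\mathcal{L}_s b_{s,s'}$ so that the Duhamel representation closes exactly.
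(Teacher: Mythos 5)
Your proposal follows essentially the same route as the paper: verify the ansatz directly, use $\mathcal{P}_\tau\mathcal{L}_\tau^{-1}\dot P_\tau=0$ and the intertwining property to handle the kernel contributions at order $\epsilon$, match the order-$\epsilon^2$ terms via $\dot a_{s,s'}=\mathcal{L}_s b_{s,s'}$, close the remainder by the Duhamel representation, and bound it using $\Vert\mathcal{U}_\epsilon(s,s')\Vert_1=1$. One caution on the step you flag as the main bookkeeping obstacle: your statement that $\dot a_{s,s'}$ equals $\mathcal{L}_s b_{s,s'}$ \emph{plus} a term in $\ran\mathcal{P}_s$ that is ``irrelevant, being \dots\ killed by $\mathcal{L}_s$'' cannot stand as written, since $\dot a_{s,s'}$ sits on the $\epsilon\dot\rho$ side of the equation, so a surviving kernel component would obstruct the matching rather than be annihilated. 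In fact no such component survives: the boundary term $\mathcal{T}(s,s)\dot{\mathcal{P}}_s\mathcal{L}_s^{-1}\dot P_s=\dot{\mathcal{P}}_s\mathcal{L}_s^{-1}\dot P_s$ of the integral combines with $\frac{\mathrm{d}}{\mathrm{d}s}(\mathcal{L}_s^{-1}\dot P_s)$ to give exactly $(\mathbb{1}-\mathcal{P}_s)\frac{\mathrm{d}}{\mathrm{d}s}(\mathcal{L}_s^{-1}\dot P_s)$ (differentiate $\mathcal{P}_s\mathcal{L}_s^{-1}\dot P_s=0$), and together with the identity $\partial_s\mathcal{T}(s,\tau)\,\dot{\mathcal{P}}_\tau\mathcal{L}_\tau^{-1}\dot P_\tau=\dot{\mathcal{P}}_s\mathcal{T}(s,\tau)\dot{\mathcal{P}}_\tau\mathcal{L}_\tau^{-1}\dot P_\tau$ one gets $\dot a_{s,s'}=\mathcal{L}_s b_{s,s'}$ on the nose, which is precisely what the Duhamel form of $r(s,s')$ requires.
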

Here $\mathcal{L}^{-1}_s$ denotes the inverse of the operator $\mathcal{L}_s$, which is defined on $\ran\mathcal{L}_s$. It is moreover bounded because of the gap of $\mathcal{L}_s$. By (\ref{eq_par_tr}, \ref{par_tr_kernel}) and $\mathcal{P}_s[\dot{\mathcal{P}}_s,\mathcal{P}_s]P_s=0$ we have $\mathcal{P}_s\dot{P}_s=0$, and hence $\dot{P}_s\in \ran\mathcal{L}_s$. The above expressions involving $\mathcal{L}^{-1}_s$ are thus well-defined. They are so also w.r.t.\ the derivatives, since $\mathcal{P}_{s}$, and hence $P_s$ and $\mathcal{L}_{s}^{-1}$, are smooth like $\mathcal{L}_{s}$ by the gap condition.
\begin{proof}
The uniform boundedness in $\epsilon$ of $r(s,s')$ for $s',s$ finite follows from $\Vert \mathcal{U}_{\epsilon}(s,s') \Vert_1= 1$, $(s'\leq s)$.
$P_s$ is mapped to $0$ under $\mathcal{L}_s$, and the same holds for the integral in (\ref{solution_ad_th}). The latter can easily be seen by remarking 
\begin{equation}\label{eq_p_l_0}
\mathcal{P}_{\tau} \mathcal{L}_{\tau}^{-1} \dot{P}_\tau=0,
\end{equation}
and hence 
\begin{equation*}
\mathcal{T}(s,\tau)\dot{\mathcal{P}}_{\tau} \mathcal{L}_{\tau}^{-1} \dot{P}_\tau=-\mathcal{T}(s,\tau)\mathcal{P}_{\tau} \frac{\mathrm{d}}{\mathrm{d}\tau}(\mathcal{L}_{\tau}^{-1} \dot{P}_\tau)\in \ker\mathcal{L}_s.
\end{equation*}
Thus for $\rho(s)$ given by (\ref{solution_ad_th}) we obtain
\begin{align*}
\mathcal{L}_s \rho(s)=&\epsilon \dot{P}_s+ \epsilon^2 \Bigl(\dot{\mathcal{P}}_s\int_{s'}^s \mathcal{T}(s,\tau)\dot{\mathcal{P}}_{\tau}\mathcal{L}_{\tau}^{-1}\dot{P}_{\tau}\,\mathrm{d}\tau+(\mathbb{1}-\mathcal{P}_s)\frac{\mathrm{d}}{\mathrm{d}s}(\mathcal{L}_s^{-1}\dot{P}_s)\Bigr)\\
&-\epsilon^3 \Bigl(\frac{\partial}{\partial s} (\mathcal{U}_{\epsilon}(s,s'))b_{s',s'}+\int_{s'}^s \frac{\partial}{\partial s}(\mathcal{U}_{\epsilon}(s,\tau))\dot{b}_{\tau,s'}\,\mathrm{d}\tau\Bigr),
\end{align*}
where we used (\ref{evolution_eq_propagator}) for the last two terms in (\ref{rest_term_ad_th}). Regarding the terms which are linear or cubed in $\epsilon$, this formula obviously matches the corresponding terms in $\epsilon\dot{\rho}(s)$. For those squared in $\epsilon$, we use that by (\ref{eq_p_l_0})
\begin{equation*}
(\mathbb{1}-\mathcal{P}_s)\frac{\mathrm{d}}{\mathrm{d}s}(\mathcal{L}_s^{-1}\dot{P}_s)=\frac{\mathrm{d}}{\mathrm{d}s}(\mathcal{L}_s^{-1}\dot{P}_s)+\dot{\mathcal{P}}_s\mathcal{L}_s^{-1}\dot{P}_s,
\end{equation*}
and by (\ref{eq_par_tr}), $\dot{\mathcal{P}}_s=\dot{\mathcal{P}}_s\mathcal{P}_s+\mathcal{P}_s\dot{\mathcal{P}}_s$, (\ref{eq_intertwining_property}) and (\ref{eq_p_l_0})
\begin{align*}
\frac{\partial}{\partial s}(\mathcal{T}(s,\tau))\dot{\mathcal{P}}_{\tau}\mathcal{L}_{\tau}^{-1} \dot{P}_\tau=&[\dot{\mathcal{P}}_s,\mathcal{P}_s]\mathcal{T}(s,\tau)\dot{\mathcal{P}}_{\tau}\mathcal{L}_{\tau}^{-1} \dot{P}_\tau\\
=&(2\dot{\mathcal{P}}_s\mathcal{P}_s-\dot{\mathcal{P}}_s)\mathcal{T}(s,\tau)\dot{\mathcal{P}}_{\tau}\mathcal{L}_{\tau}^{-1} \dot{P}_\tau\\
=&\dot{\mathcal{P}}_s\mathcal{T}(s,\tau)(2\mathcal{P}_{\tau}\dot{\mathcal{P}}_{\tau}-\dot{\mathcal{P}}_{\tau})\mathcal{L}_{\tau}^{-1} \dot{P}_\tau\\
=&\dot{\mathcal{P}}_s\mathcal{T}(s,\tau)\dot{\mathcal{P}}_{\tau}\mathcal{L}_{\tau}^{-1} \dot{P}_\tau.
\end{align*} 
\end{proof}
\end{subsubsection}
\end{subsection}


\section{Two-level dephasing Lindbladians}\label{section_tunneling_for_dephasing_lindbladians}
The main result (Theorem~\ref{theorem_L_Z_generalized} below) computes the transition probability from the ground state for a dephasing Lindbladian corresponding to the Landau-Zener Hamiltonian. Prior to this, we will make general considerations about two-level dephasing Lindbladians. 

\begin{subsection}{Minimally degenerate dephasing Lindbladians}\label{subsection_two_level_dephasing_lindbladians}
A dephasing Lindbladian, recall Section \ref{subsubsection_Dephasing_Lindbladians}, has a kernel of dimension at least $n$, where $n$ is the (finite) dimension of the Hilbert space. Eigenvalues other than $0$ may be non-degenerate. We say that the dephasing Lindbladian is \textbf{\emph{minimally degenerate}} if each eigenvalue is minimally degenerate.

Let us consider such a Lindbladian $\mathcal{L}$ governing the evolution of a two-level system. The following lemma holds true:
\begin{lemma}\label{lemma_generality_Lindbladian}
Any minimally degenerate dephasing Lindbladian $\mathcal{L}=(\tilde{H},\Gamma_{\alpha})$ of a two-level system is of the form
\begin{equation}\label{general_form_two_state_Lindbladian}
\mathcal{L} =\mathcal{L}_{0} +\frac{\gamma}{2}\mathcal{D}.
\end{equation}
Here
\begin{equation}\label{definition_L_0}
\mathcal{L}_{0} \rho(s)=-i[H,\rho(s)],
\end{equation}
\begin{equation}\label{definition_D}
\mathcal{D}\rho(s)=-[\sqrt{H},[\sqrt{H},\rho(s)]],
\end{equation}
and $\gamma\geq 0$ is a real number. Furthermore $H$ is a non-degenerate, traceless Hamiltonian, and $\sqrt{H}\coloneqq\sgn (H)\sqrt{|H|}$.

Moreover, if $\mathcal{L}=\mathcal{L}_s $ depends smoothly on $s$, then so do $H_s$ and $\gamma_s$.
\end{lemma}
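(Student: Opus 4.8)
The plan is to reduce a general minimally degenerate dephasing two-level Lindbladian to the stated normal form by systematically exploiting the gauge freedom in \eqref{gauge_transformations} and the structure of $2\times 2$ matrices. First I would use the definition of dephasing: since $n=2$, any jump operator $\Gamma_\alpha=f_\alpha(\tilde H)$ is a Borel function of the Hamiltonian, hence of the form $\Gamma_\alpha = c_\alpha\mathbb{1} + d_\alpha\,\widehat H$, where $\widehat H$ is a fixed Hermitian function of $\tilde H$ with the same eigenprojections (here is where I would pick $\widehat H = \sqrt{\tilde H} := \sgn(\tilde H)\sqrt{|\tilde H|}$, but really any convenient representative). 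Using the unitary mixing freedom $\Gamma\mapsto\mathcal U\Gamma$, I would diagonalize the positive matrix $(\langle d_\alpha, d_\beta\rangle)_{\alpha\beta}$ built from the coefficients, collapsing the family of jump operators to a single effective jump operator proportional to $\widehat H + \text{(constant)}\mathbb{1}$; then the shift gauge $\Gamma_\alpha\mapsto\Gamma_\alpha + c_\alpha\mathbb{1}$ (with the compensating change of $\tilde H$ in \eqref{gauge_transformations}) removes the remaining constant, at the cost of replacing $\tilde H$ by some Hermitian $H$. One then checks that the dissipative part of the resulting Lindbladian is exactly $\tfrac{\gamma}{2}\mathcal D$ with $\mathcal D$ as in \eqref{definition_D} and $\gamma\ge 0$ the squared norm of the effective jump coefficient, and that subtracting a multiple of $\mathbb{1}$ from $H$ (an allowed gauge move, $e\mathbb{1}$ in \eqref{gauge_transformations}) makes $H$ traceless. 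Non-degeneracy of $H$ is forced by minimal degeneracy: if $H$ were degenerate the whole Lindbladian would be $\mathcal L_0\equiv 0$ plus a dissipator with a two-dimensional-too-large kernel, contradicting the hypothesis (or it would make $\sqrt H$ ill-defined / the crossing trivial). This establishes \eqref{general_form_two_state_Lindbladian}.

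For the smoothness addendum — which is the actual content being asked for here — the point is that $H_s$ and $\gamma_s$ are determined, as above, through an algebraic procedure from $\mathcal L_s$, and I must argue this procedure can be carried out smoothly in $s$. The cleanest route is to avoid reconstructing $H_s$ from the gauge reduction and instead read off the invariants directly from $\mathcal L_s$ and its spectral data. Concretely: the superoperator $\mathcal L_s$ acts on $\mathcal B(\mathbb C^2)\cong\mathbb C^4$; its kernel is $2$-dimensional (spanned by the two eigenprojections of $H_s$), and on the orthogonal complement (the off-diagonal sector, spanned by $E^{01}_s, E^{10}_s$ in the notation of Section~\ref{subsubsection_Dephasing_Lindbladians}) it acts as a $2\times 2$ block with eigenvalues $\pm 2 i e_s - \tfrac{\gamma_s}{2}(\sqrt{e_s}+\sqrt{-e_s})^2$; computing $\sqrt{H_s}$ eigenvalues gives $\pm\sqrt{e_s}$ so the decoherence rate of the coherence $E^{01}_s$ is $\tfrac{\gamma_s}{2}(\sqrt{e_s}-(-\sqrt{e_s}))^2 = 2\gamma_s e_s$. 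Hence $e_s$ is (half) the imaginary part and $\gamma_s$ is (the real part divided by $2 e_s$) of the relevant eigenvalue of the restriction $\mathcal L_s|_{\ran\mathcal Q_s}$. Since $\mathcal L_s$ is smooth by hypothesis and has a spectral gap (the eigenvalue $0$ is uniformly isolated, Section~\ref{subsubsection_tun_deph_Lindb}), the Riesz-projection formalism gives that $\mathcal Q_s$, the reduced operator, and — because the two nonzero eigenvalues are complex conjugates and distinct from each other (as $g>0$, $e_s>0$) and from $0$ — the eigenvalues themselves depend smoothly on $s$. Therefore $e_s$ and $\gamma_s$ are smooth; and $H_s$, being the traceless Hermitian operator whose eigenprojections are the smooth spectral projections $P^{\pm}_s = \lim_{k}(\text{polynomial in }\mathcal L_s)$ acting on, say, $\mathbb{1}$, and whose eigenvalues are $\pm e_s$, is smooth as well.

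The main obstacle I anticipate is the bookkeeping in the smoothness argument at points where $\gamma_s = 0$ or where a branch choice in $\sqrt{H_s}$ could a priori jump. The sign function $\sgn(H_s)$ is discontinuous across $e_s = 0$, but by the standing assumption that $H_s$ goes through an avoided crossing with $g>0$ we have $e_s = \tfrac12\sqrt{s^2+g^2} > 0$ uniformly, so $\sgn(H_s)$ is constant on each eigenspace and $\sqrt{H_s}$ is genuinely smooth — I would state this dependence on the hypotheses explicitly. Likewise, at $\gamma_s=0$ the decoherence rate $2\gamma_s e_s$ vanishes but the formula $\gamma_s = \operatorname{Re}(\lambda_s)/(2 e_s)$ still defines a smooth function since $e_s$ is bounded away from $0$; and nonnegativity $\gamma_s\ge 0$ is inherited from complete positivity of $\mathcal L_s$ for every fixed $s$ (equivalently, from the Lindblad form). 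A secondary technical point is justifying that the reduced $2\times 2$ block on the coherence sector is actually normal/diagonalizable with the claimed eigenvalues; this follows from writing $\mathcal L_s$ in the $\{E^{ij}_s\}$ basis, where the dephasing structure makes it block-diagonal with the coherence block equal to $\operatorname{diag}(-2 i e_s - 2\gamma_s e_s,\ 2 i e_s - 2\gamma_s e_s)$, so there is in fact nothing to diagonalize.
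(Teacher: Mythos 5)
Your route differs from the paper's in both halves, and most of it works. The paper argues directly on the basis $\{P^{+},P^{-},E,E^{*}\}$: a dephasing $\mathcal{L}$ annihilates $P^{\pm}$ and acts on $E$ as multiplication by $\lambda=-i(e^{+}-e^{-})+\sum_{\alpha}f_{\alpha}^{+}\overline{f_{\alpha}^{-}}-\tfrac12(|f_{\alpha}^{+}|^2+|f_{\alpha}^{-}|^2)$ (and by $\overline{\lambda}$ on $E^{*}$), while the candidate normal form with $H=\kappa(P^{+}-P^{-})$ acts as $-2i\kappa-2\gamma|\kappa|$; matching gives the explicit formulas $\kappa=\tfrac12\bigl(e^{+}-e^{-}-\sum_{\alpha}\Im(f_{\alpha}^{+}\overline{f_{\alpha}^{-}})\bigr)$ and $\gamma=\tfrac{1}{4|\kappa|}\sum_{\alpha}|f_{\alpha}^{+}-f_{\alpha}^{-}|^2$, from which existence and the smoothness addendum are both immediate. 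Your gauge reduction to a single jump operator is a legitimate alternative for the existence part (it makes the origin of the Lamb-shifted $H$ transparent), and reading $e_s$ and $\gamma_s$ off the eigenvalue of $\mathcal{L}_s$ on the coherence block is an equally valid, basis-free route to smoothness. Two cautions there: the lemma precedes the Landau--Zener specialization, so "$e_s>0$ uniformly because $g>0$" is not available --- non-degeneracy of $H_s$ at each $s$ must come from minimal degeneracy; and after your gauge reduction the surviving jump operator is a multiple of $\sqrt{\tilde H}$, not of the shifted $\sqrt{H}$, so the two double commutators differ by a positive scalar that must be absorbed into $\gamma$ --- your "$\gamma=$ squared norm of the effective jump coefficient" is off by exactly that factor.

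The step that is actually wrong is your justification that the effective $H$ is non-degenerate. If the Lamb shift cancels the bare gap, i.e.\ $e^{+}-e^{-}=\sum_{\alpha}\Im(f_{\alpha}^{+}\overline{f_{\alpha}^{-}})$ so that $\kappa=0$, the resulting Lindbladian is not "$\mathcal{L}_0\equiv0$ plus a dissipator with a two-dimensional-too-large kernel": it is a nonzero pure dissipator $-\tfrac{c}{2}[\sqrt{\tilde H},[\sqrt{\tilde H},\cdot]]$ whose kernel is still exactly two-dimensional, since $\tilde H$ (hence $\sqrt{\tilde H}$) is non-degenerate. What fails is not the kernel dimension but the simplicity of the nonzero eigenvalue: $\lambda=\overline{\lambda}$ becomes a real, doubly degenerate eigenvalue, and that is precisely what "minimally degenerate" excludes. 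The hypothesis enters through $\Im(\lambda)\neq0$, which is what lets one solve for $\kappa\neq0$ and hence define $\gamma$; without this your argument does not rule out Lindbladians that are dephasing with a two-dimensional kernel yet not of the form (\ref{general_form_two_state_Lindbladian}). (The case your "too-large kernel" phrase actually describes is $\tilde H$ degenerate, where every $f_{\alpha}(\tilde H)\propto\mathbb{1}$, $\mathcal{L}=0$ and the kernel is four-dimensional; the paper disposes of that case separately at the outset.)
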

\begin{proof}
If $\tilde{H}$ were degenerate, then $\tilde{H}\propto 1$ and $\Gamma_{\alpha}\propto 1$, whence $\mathcal{L}=0$. This possibility is ruled out, because then $\mathcal{L}$ is not minimally degenerate. Let thus $\tilde{H}$ have distinct eigenvalues $e^{+},e^{-}$.
$\mathcal{L}$ is uniquely determined by its action on a basis of $\mathcal{B}(\mathcal{H})$. Let us denote the right hand side of (\ref{general_form_two_state_Lindbladian}) by $\tilde{\mathcal{L}}$. It is enough to show that we find a Hamiltonian $H$, such that $\tilde{\mathcal{L}}$ has the same action on the basis elements as $\mathcal{L}$. Let $P^{\pm}\equiv \ket{\psi^{\pm}}\bra{\psi^{\pm}}$ be the eigenprojections corresponding to the eigenvalues $e^{\pm}$, and define $E=\ket{\psi^{+}}\bra{\psi^{-}}$. Note that $E$ is only defined up to a phase; however, for this proof the choice of this phase is not relevant.
The set $\{P^{+},P^{-},E,E^*\}$ constitutes a basis of $\mathcal{B}(\mathcal{H})$. By the definition of a dephasing Lindbladian, we have $\Gamma_{\alpha}=f_{\alpha}(\tilde{H})$ and thus
\begin{gather*}
\mathcal{L}(P^{\pm})=0,\quad \mathcal{L}(E)=\lambda E,\quad \mathcal{L}(E^*)=\overline{\lambda} E^*,
\end{gather*}
with
\begin{gather}\label{eigenv_lambda}
\lambda=-i(e^{+}-e^{-})+\sum_{\alpha\in I}f_{\alpha}^{+}\overline{f_{\alpha}^{-}}-\frac{1}{2}(\overline{f_{\alpha}^{+}}f_{\alpha}^{+}+\overline{f_{\alpha}^{-}}f_{\alpha}^{-}),\quad
f_{\alpha}^{\pm}=f_{\alpha}(e^{\pm}),
\end{gather}
and 
\begin{equation*}
\Re(\lambda)=-\frac{1}{2}\sum_{\alpha\in I}|f_{\alpha}^{+}-f_{\alpha}^{-}|^2\leq 0.
\end{equation*}
Let us set $H=\kappa(P^{+}-P^{-})$, with $\kappa$ to be determined. Then 
\begin{gather*}
\tilde{\mathcal{L}}(P^{\pm})=0,\quad \tilde{\mathcal{L}}(E)=(- 2i\kappa-2\gamma|\kappa|)E,\quad \tilde{\mathcal{L}}(E^*)=( 2i\kappa-2\gamma|\kappa|)E^*,
\end{gather*}
resulting in 
\begin{equation*}
\lambda,\overline{\lambda}=\mp2i\kappa-2\gamma|\kappa|.
\end{equation*}
We have $\Im(\lambda)\not=0$ because $\mathcal{L}$ is minimally degenerate. The equation can thus be solved for $\kappa,\gamma$, yielding
\begin{gather*}
\kappa=\frac{1}{2}\left(e^{+}-e^{-}-\sum_{\alpha\in I}\Im(f_{\alpha}^{+}\overline{f_{\alpha}^{-}})\right)\,,\quad
\gamma=\frac{1}{4|\kappa|}\sum_{\alpha\in I}|f_{\alpha}^{+}-f_{\alpha}^{-}|^2\,.
\end{gather*}

If $\mathcal{L}_s=(\tilde{H}_s,\Gamma_{\alpha,s})$ are smooth in $s$, then so are $P^{\pm}_s$ and, by inspection, $\kappa_s,\gamma_s$.
\end{proof}
We consider a family of two-level non-degenerate Hamiltonians near a minimum of the gap. By choosing an appropriate basis and parametrization the behaviour is captured by
\begin{equation}\label{LZ_Hamiltonian}
H_s=\frac{1}{2}\begin{pmatrix}
s & g \\ g& -s
\end{pmatrix}=e_s(P_s^{+}-P_s^{-}),
\end{equation}
where $g>0$, and $\pm e_{s}=\pm 1/2\sqrt{s^2+g^2}$ and $P_s^{\pm}$ are the eigenvalues and eigenprojections of $H_s$ respectively.
\begin{lemma}\label{ad_th_for_two_level_Lindbladian}
Let $\mathcal{L}_s$ be given by (\ref{general_form_two_state_Lindbladian}-\ref{definition_D}) and (\ref{LZ_Hamiltonian}). Then $\epsilon \dot{\rho}(s) = \mathcal{L}_s \rho(s)$ admits solutions
\begin{equation}\label{solution_ad_th_2}
\rho^{\pm}(s) = P^{\pm}_s + \epsilon \,a^{\pm}_{s,s'} + \epsilon^2 r^{\pm}(s,s'),
\end{equation}
with
\begin{equation}\label{def_a_pm}
a^{\pm}_{s,s'}\coloneqq \pm\frac{g((i-\gamma_s)E_s+h.c.)}{16(1+\gamma_s^2)e_s^3}\pm \frac{g^2(P_s^{-}-P_s^{+})}{64}\int_{s'}^s \frac{\gamma_{\tau}}{(1+\gamma_{\tau}^2)e_{\tau}^5}\,\mathrm{d} \tau,
\end{equation}
$(s\geq s')$, and a specific choice of $E_s$ given below. Moreover, if $\gamma_s$ as well as its first and second derivative are bounded continuous functions, then $r^{\pm}(s,s')$ is uniformly bounded in $\epsilon$ and $s,s'$.
\end{lemma}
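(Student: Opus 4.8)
The plan is to specialize the Adiabatic Theorem (Theorem~\ref{adiabatic_theorem}) to the two-level dephasing Lindbladian at hand and then verify that the abstract objects $a_{s,s'}$ and $r(s,s')$ appearing there reduce to the explicit expressions $a^{\pm}_{s,s'}$ and $r^{\pm}(s,s')$ claimed in the lemma. The spectral hypotheses of Theorem~\ref{adiabatic_theorem} are met: by Lemma~\ref{lemma_generality_Lindbladian} the Lindbladian is minimally degenerate, so zero is a doubly degenerate eigenvalue uniformly isolated in $s$ (the other eigenvalues being $\mp 2ie_\tau-2\gamma_\tau e_\tau$, which stay away from $0$ since $e_s\ge g/2>0$), and the spectral projections $P_s^{\pm}$ are smooth and parallel transported (as noted right after \eqref{par_tr_kernel}). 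Hence Theorem~\ref{adiabatic_theorem} applies with $P_s=P_s^{\pm}$, giving the form \eqref{solution_ad_th_2} with $a^{\pm}_{s,s'}=\mathcal{L}_s^{-1}\dot P_s^{\pm}+\int_{s'}^s\mathcal{T}(s,\tau)\dot{\mathcal{P}}_\tau\mathcal{L}_\tau^{-1}\dot P_\tau^{\pm}\,\mathrm{d}\tau$ and $r^{\pm}$ given by \eqref{rest_term_ad_th}--\eqref{def_b_j}.

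The computational core is then: (i) in the orthonormal Hilbert--Schmidt basis $\{P_s^+,P_s^-,E_s,E_s^*\}$, record the action of $\mathcal{L}_s$ — it annihilates $P_s^{\pm}$ and acts on $E_s$ by multiplication by $\lambda_s=-2ie_s-2\gamma_s e_s$ (using $\sqrt{H_s}=\sgn(H_s)\sqrt{|H_s|}$, so $[\sqrt{H_s},\cdot]$ on $E_s$ contributes $2e_s$, squared gives $4e_s$, halved by $\gamma_s/2$... one must track the factors carefully here) — so $\mathcal{L}_s^{-1}$ is multiplication by $1/\lambda_s$ on the span of $E_s,E_s^*$; (ii) compute $\dot P_s^{\pm}$, which lies in $\mathrm{span}\{E_s,E_s^*\}$ (since $\mathcal{P}_s\dot P_s^{\pm}=0$), with the explicit coefficient governed by $\dot e_s$ and the rotation of the eigenbasis — here the phase convention for $E_s$ must be fixed precisely so that $\dot P_s^{+}=c_s E_s+\bar c_s E_s^*$ with the stated coefficient, and this is "the specific choice of $E_s$ given below"; (iii) apply $\mathcal{L}_s^{-1}$ to get the first term of $a^{\pm}_{s,s'}$, namely $\pm g((i-\gamma_s)E_s+\mathrm{h.c.})/(16(1+\gamma_s^2)e_s^3)$, using $1/\lambda_s=(i-\gamma_s)/(2e_s(1+\gamma_s^2)) \cdot(\text{sign})$; (iv) for the integral term, note $\dot{\mathcal{P}}_\tau\mathcal{L}_\tau^{-1}\dot P_\tau^{\pm}$: since $\mathcal{L}_\tau^{-1}\dot P_\tau^{\pm}\in\mathrm{span}\{E_\tau,E_\tau^*\}$, and $\dot{\mathcal{P}}_\tau$ maps that span into $\mathrm{span}\{P_\tau^+,P_\tau^-\}=\ker\mathcal{L}_\tau$, and then $\mathcal{T}(s,\tau)$ acts trivially on the kernel (it merely keeps the combination inside $\ker\mathcal{L}_s$, and for the diagonal combination $P_\tau^- - P_\tau^+$ it is in fact the identity by uniqueness of parallel transport of such fixed convex combinations), one obtains exactly $\pm\frac{g^2(P_s^- - P_s^+)}{64}\int_{s'}^s\frac{\gamma_\tau}{(1+\gamma_\tau^2)e_\tau^5}\,\mathrm{d}\tau$ after plugging in the explicit $\dot{\mathcal{P}}_\tau$ coefficients; the number $64=16\cdot 4$ and the power $e_\tau^5$ come from multiplying the two first-order pieces and one more $\dot e_\tau/e_\tau$-type factor.

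For the remainder bound, I would invoke directly the last clause of Theorem~\ref{adiabatic_theorem}, which gives uniform boundedness of $r(s,s')$ in $\epsilon$ for $s,s'$ finite, and then upgrade to uniformity in $s,s'$ by a decay argument: from \eqref{rest_term_ad_th}--\eqref{def_b_j}, $r^{\pm}(s,s')$ is controlled by $\|b_{s,s'}\|$, $\|b_{s',s'}\|$, and $\int_{s'}^s\|\dot b_{\tau,s'}\|\,\mathrm{d}\tau$ (using $\|\mathcal{U}_\epsilon(s,\tau)\|_1=1$). The functions $1/e_\tau$ and its derivatives decay like $1/|\tau|$ as $|\tau|\to\infty$, and the integrand defining $a^{\pm}$ decays like $1/|\tau|^5$, so — given that $\gamma_\tau,\dot\gamma_\tau,\ddot\gamma_\tau$ are bounded — the quantities $b_{\tau,s'}$ and $\dot b_{\tau,s'}$ are integrable in $\tau$ with bounds independent of $s,s'$, yielding the claimed uniformity. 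The main obstacle, I expect, is bookkeeping: fixing the phase of $E_s$ and tracking the numerical constants and powers of $e_s$ through the several nested applications of $\mathcal{L}_s^{-1}$, $\dot{\mathcal{P}}_s$, and $\mathcal{T}$ so that the prefactors $1/16$, $1/64$ and the exponents $3$, $5$ come out exactly right; the conceptual steps (that everything lives in the two two-dimensional invariant subspaces and that $\mathcal{T}$ is essentially trivial there) are straightforward once that is set up.
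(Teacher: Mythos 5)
Your proposal follows essentially the same route as the paper's proof: specialize Theorem~\ref{adiabatic_theorem} to $P_s=P_s^{\pm}$, compute $\dot P_s^{\pm}$, $\mathcal{L}_s^{-1}E_s$ and $\dot{\mathcal{P}}_s E_s$ in the basis $\{P_s^{+},P_s^{-},E_s,E_s^{*}\}$ to obtain (\ref{def_a_pm}), and control the remainder via $\Vert\mathcal{U}_{\epsilon}\Vert_1=1$ together with the $O(e_s^{-3})$ decay of $b^{\pm}_{s,s'}$ and $\dot b^{\pm}_{s,s'}$, exactly as the paper does (with the explicit bookkeeping relegated to its Appendices \ref{app:calc:b} and \ref{app:calc:b:dot}). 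The ingredients and constants you identify check out, so this is the same proof in outline.
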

The dephasing property of $\mathcal{L}_s$ implies $\rho^{+}(s)+\rho^{-}(s)=\mathbb{1}$, which is reflected in the expansion through $a^{-}_{s,s'} = - a^{+}_{s,s'}$.
Note that the last part of the statement is a strengthening with respect to Theorem~\ref{adiabatic_theorem}, since there the remainder term is uniformly bounded only for bounded $s,s'$.

We will later use the lemma for $\mathcal{L}_{-s}^*$ instead of $\mathcal{L}_s$. By the remark preceding Proposition \ref{proposition}, $\mathcal{T}(-t,-\sigma)$ is identical for $\mathcal{L}_{-t}^*$ and $\mathcal{L}_{-t}$. Moreover, $\mathcal{L}_{-t}^*$ has the same eigenstates as $\mathcal{L}_{-t}$, though the corresponding eigenvalues are complex conjugated. This implies the following adaption of $a^{\pm}_{s,s'}$:
\begin{equation}\label{def_hat_a_pm}
\hat{a}^{\pm}_{-s',-s}\coloneqq \mp\frac{g((i-\gamma_{-s})E_{-s}^*+h.c.)}{16(1+\gamma_{-s}^2)e_{-s}^3}\pm \frac{g^2(P_{-s}^{-}-P_{-s}^{+})}{64}\int_{s'}^s \frac{\gamma_{-\tau}}{(1+\gamma_{-\tau}^2)e_{-\tau}^5}\,\mathrm{d} \tau\,.
\end{equation}

As mentioned in the proof of Lemma \ref{lemma_generality_Lindbladian}, $E_{s}$ is only defined up to a phase. We  choose it to be real, i.e.\
\begin{equation}\label{E+-}
E_{s}=\frac{1}{4e_{s}}\begin{pmatrix}
g & -s- 2e_s\\
-s+2e_s & -g
\end{pmatrix}\,.
\end{equation}
\begin{proof}
By Theorem~\ref{adiabatic_theorem},
\begin{gather*}
a^{\pm}_{s,s'}=\mathcal{L}^{-1}_s \dot{P}^{\pm}_s + \int_{s'}^s \mathcal{T}(s,\tau)\dot{\mathcal{P}}_{\tau} \mathcal{L}_{\tau}^{-1} \dot{P}^{\pm}_\tau \,\mathrm{d} \tau. 
\end{gather*}
We compute
\begin{equation}\label{P_dot_lin_comb}
\dot{P}_{s}^{\pm}=\pm \frac{g}{8\,e_s^2}(E_{s}+E_{s}^*)
\end{equation}
and
\begin{gather}\label{action_inverse_super_op}
\mathcal{L}_{s}^{-1}E_{s}=\frac{E_{s}}{2(- i-\gamma_{s})e_s},\quad \mathcal{L}_{s}^{-1}E_{s}^*=(\mathcal{L}_{s}^{-1}E_{s})^*,
\end{gather}
which together determine $\mathcal{L}_s^{-1}\dot{P}_s^{\pm}$ and yield the first term of (\ref{def_a_pm}). The second term is then seen to have the stated form by the use of
\begin{equation}\label{action_dot_P}
\dot{\mathcal{P}}_{s}E_{s}=\dot{\mathcal{P}}_{s}E_{s}^*=\frac{g}{8\,e_s^2}(P_s^{+}-P_s^{-})
\end{equation}
and (\ref{par_tr_kernel}).
Finally, the last term $r^{\pm}(s,s')$ was computed in (\ref{rest_term_ad_th}). There
\begin{equation}\label{norm_U}
\Vert \mathcal{U}_{\epsilon}(s,s')\Vert_1=1,
\end{equation}
like before, and $\Vert b^{\pm}_{s,s'}\Vert_1=O(s^{-3})$, which is manifest from the exact calculation of $b^{\pm}_{s,s'}$, see Appendix \ref{app:calc:b}. Moreover, exact calculation (Appendix \ref{app:calc:b:dot}) provides
\begin{equation*}
\Vert \dot{b}^{\pm}_{\tau,s'}\Vert_1\leq c\,e_{\tau}^{-3}
\end{equation*}
with $c>0$, and hence the last term in $r^{\pm}(s,s')$ is bounded by a constant for all $s,s'$. 
\end{proof}
\end{subsection}


\begin{subsection}{Transition probabilities for minimally degenerate two-level dephasing Lindbladians}\label{subsection_transition_prob_2_level_dephasing_Lindbladians}
Let $\mathcal{L}_s$ be given by (\ref{general_form_two_state_Lindbladian}-\ref{definition_D}) and (\ref{LZ_Hamiltonian}). The following theorem establishes the transition probability claimed in (\ref{theorem_L_Z_generalized_formula}).
\begin{theorem}\label{theorem_L_Z_generalized}
Let $\gamma_s$ as well as its first two derivatives be bounded
continuous functions, and set $\gamma\coloneqq \sup_s\gamma_s$. 
Then
\begin{equation*}
p(\epsilon,\gamma)=\exp(-\pi \frac{g^2}{2\epsilon})+\epsilon \int_{-\infty}^{\infty}\,\frac{\gamma_{\tau}}{1+\gamma_{\tau}^2}\frac{\tr(P_{\tau}^{-}(\dot{P}_{\tau}^{+})^2P_{\tau}^{-})}{e_{\tau}}\,\mathrm{d}\tau+R,
\end{equation*}
where the remainder $R$ satisfies $|R|\le C\gamma\epsilon^2$. The
constant $C$ depends only on $g$ and a bound on the stated derivatives. 
\end{theorem}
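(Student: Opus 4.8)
\emph{Sketch of the proof.} The plan is to peel off the first‑order adiabatic contribution of the evolution — which will produce exactly the integral in the theorem — and then to compare the second‑order remainder with the purely Hamiltonian dynamics, so that the Landau–Zener exponential is recovered exactly and the leftover is genuinely of order $\gamma\epsilon^2$. Concretely, I would feed the solution $\mathcal{U}_\epsilon(\cdot,-T)P^-_{-T}$ into Lemma~\ref{ad_th_for_two_level_Lindbladian}. By (\ref{rest_term_ad_th}) one has $r^-(-T,-T)=0$, and by (\ref{def_a_pm}) the integral in $a^-_{-T,-T}$ is empty, so $\Vert a^-_{-T,-T}\Vert_1=O(e_{-T}^{-3})$. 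Hence the true evolution equals $\rho^-(T)-\epsilon\,\mathcal{U}_\epsilon(T,-T)a^-_{-T,-T}$, with the correction of trace norm $O(e_{-T}^{-3})$, and
\[
p=\lim_{T\to\infty}\Bigl(\epsilon\,\tr\bigl(P^+_T a^-_{T,-T}\bigr)+\epsilon^2\,\tr\bigl(P^+_T r^-(T,-T)\bigr)\Bigr).
\]
In the first term the off‑diagonal part of $a^-_{T,-T}$ from (\ref{def_a_pm}) has vanishing trace against $P^+_T$, while the diagonal part contributes $\frac{g^2}{64}\int_{-T}^{T}\frac{\gamma_\tau}{(1+\gamma_\tau^2)e_\tau^5}\,\mathrm d\tau$; by dominated convergence (the integrand is $\le\gamma\,e_\tau^{-5}$, which is integrable) and (\ref{explicit_numerator}) this converges to the integral in the theorem, and is $O(\gamma\epsilon)$. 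It therefore remains to prove $\epsilon^2\lim_T\tr(P^+_T r^-(T,-T))=\exp(-\pi g^2/(2\epsilon))+O(\gamma\epsilon^2)$.

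Running the same steps at $\gamma\equiv 0$ gives a second reduction. There $a^{0,-}$ in (\ref{def_a_pm}) is purely off‑diagonal, so its trace against $P^+_T$ vanishes, and the identity above collapses to $\exp(-\pi g^2/(2\epsilon))=p|_{\gamma=0}=\epsilon^2\lim_T\tr(P^+_T r^{0,-}(T,-T))$, the first equality being the classical Landau–Zener formula (\ref{LZ_tunneling}); here $r^{0,-}$ is the remainder of Lemma~\ref{ad_th_for_two_level_Lindbladian} at $\gamma\equiv 0$, built from the propagator $\mathcal{U}^0_\epsilon$ of the Hamiltonian part $-i[H_s,\,\cdot\,]$. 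Subtracting, the theorem is equivalent to $\lim_{T\to\infty}\tr\bigl(P^+_T\,[\,r^-(T,-T)-r^{0,-}(T,-T)\,]\bigr)=O(\gamma)$.

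To obtain this I would expand both remainders by (\ref{rest_term_ad_th})--(\ref{def_b_j}). The differences of the ``slow'' ingredients ($b^--b^0$, $\dot b^--\dot b^0$, and $\mathcal{U}_\epsilon,\mathcal{U}^0_\epsilon$ acting on decaying operators) are $O(\gamma)$ by the exact computations of Appendices~\ref{app:calc:b}--\ref{app:calc:b:dot} ($\Vert b^\pm_{s,s'}\Vert_1=O(s^{-3})$, $\Vert\dot b^\pm_{s,s'}\Vert_1\le c\,e_s^{-3}$, with Lipschitz dependence on $\gamma$). The only remaining structure carries $\mathcal{U}_\epsilon(T,\cdot)-\mathcal{U}^0_\epsilon(T,\cdot)$ applied to a slow, decaying operator; I would expand it by Duhamel, $\mathcal{U}_\epsilon(T,\sigma)-\mathcal{U}^0_\epsilon(T,\sigma)=\epsilon^{-1}\!\int_\sigma^{T}\mathcal{U}_\epsilon(T,\sigma')\frac{\gamma_{\sigma'}}{2}\mathcal{D}_{\sigma'}\mathcal{U}^0_\epsilon(\sigma',\sigma)\,\mathrm d\sigma'$, move the Hilbert–Schmidt self‑adjoint $\mathcal{D}_{\sigma'}$ onto $\mathcal{U}_\epsilon(T,\sigma')^*P^+_T$, and use $\mathcal{D}_{\sigma'}P^+_{\sigma'}=0$ together with the adiabatic expansion of $\mathcal{U}_\epsilon(T,\sigma')^*P^+_T$ (Lemma~\ref{ad_th_for_two_level_Lindbladian} applied to $\mathcal{L}^*_{-s}$, cf.\ (\ref{def_hat_a_pm})), whose diagonal part is annihilated by $\mathcal{D}_{\sigma'}$ and whose off‑diagonal part is $O(\epsilon\,e_{\sigma'}^{-3})$ up to an oscillatory remainder. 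Integrating by parts in $\sigma'$ against the rapidly oscillating dynamical phase — whose $\sigma'$‑derivative returns exactly the factor $e_{\sigma'}$ produced by $\mathcal{D}_{\sigma'}$ — trades that factor for a power of $\epsilon$ and compensates the $\epsilon^{-1}$ from Duhamel, while the weight $\gamma_{\sigma'}/2$ supplies the overall factor $\gamma$; all resulting integrals are absolutely convergent uniformly in $\epsilon$ and $T$. The genuinely non‑adiabatic, exponentially small ($\sim e^{-c g^2/\epsilon}$) contribution that the truncated adiabatic expansion misses appears only multiplied by the weight $\gamma_{\sigma'}$, hence is $O(\gamma e^{-cg^2/\epsilon})=o(\gamma\epsilon^2)$.

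I expect the main obstacle to be precisely this last estimate. A crude bound on $\mathcal{D}_\tau$ loses a factor $\Vert\mathcal{D}_\tau\Vert\sim e_\tau\sim|\tau|$, which would make the $\tau$‑integrals diverge as $T\to\infty$; the argument therefore has to exploit both that $\mathcal{D}_\tau$ kills diagonal operators and that the surviving off‑diagonal content — of the propagated ground state and of $\mathcal{U}_\epsilon(T,\cdot)^*P^+_T$ — carries the rapid dynamical phase, in combination with the decay rates from the appendices, so that all contributions converge and are $O(\gamma\epsilon^2)$ uniformly in $\epsilon$ and in the limit $T\to\infty$.
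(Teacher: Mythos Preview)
Your overall strategy is viable and uses the same ingredients as the paper, but in the reverse order, and the reordering makes the hardest step harder than it needs to be. The paper applies Duhamel \emph{first}, writing
\[
\mathcal{U}_\epsilon(T,-T)=\mathcal{U}_{\epsilon,0}(T,-T)+\frac{1}{2\epsilon}\int_{-T}^{T}\mathcal{U}_\epsilon(T,\tau)\,\gamma_\tau\,\mathcal{D}_\tau\,\mathcal{U}_{\epsilon,0}(\tau,-T)\,\mathrm d\tau,
\]
so that the Landau--Zener term $\tr(P^+_T\mathcal{U}_{\epsilon,0}(T,-T)P^-_{-T})$ separates \emph{exactly} at once, with no comparison of remainders needed. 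In the correction integral $\mathcal{D}_\tau$ sits between $\mathcal{U}_\epsilon^*(T,\tau)P^+_T$ and $\mathcal{U}_{\epsilon,0}(\tau,-T)P^-_{-T}$; both are propagated projections, so the adiabatic expansion on each leg has leading term in $\ker\mathcal{D}_\tau$, and one gains an $\epsilon$ from \emph{each} side. The resulting nine cross terms $T^{ij}_\tau$ are then bounded by direct size estimates (Lemmas~\ref{lemma_vanishing_terms}--\ref{lemma_rest_terms_2}).

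Your route (adiabatic first, then compare $r^-$ to $r^{0,-}$, then Duhamel inside the difference, then adiabatic again on $\mathcal{U}_\epsilon^*(T,\sigma')P^+_T$) unwinds to the same estimates, but with an extra layer. Concretely: after Duhamel on $\mathcal{U}_\epsilon-\mathcal{U}^0_\epsilon$ and the adiabatic expansion of $\mathcal{U}_\epsilon^*(T,\sigma')P^+_T$, the term you flag as the obstacle is
\[
\epsilon\int_{-T}^{T}\frac{\gamma_{\sigma'}}{2}\,\tr\!\Bigl(\tilde r^+(T,\sigma')\,\mathcal{D}_{\sigma'}\,J(\sigma')\Bigr)\,\mathrm d\sigma',
\qquad
J(\sigma')\coloneqq\int_{-T}^{\sigma'}\mathcal{U}^0_\epsilon(\sigma',\tau)\,\dot b^{0,-}_{\tau,-T}\,\mathrm d\tau.
\]
Your proposed fix, integration by parts in $\sigma'$ against the dynamical phase, does not work as stated: neither $\tilde r^+(T,\sigma')$ nor $J(\sigma')$ is an explicit oscillatory function, and differentiating $J$ in $\sigma'$ reintroduces a factor $\epsilon^{-1}$ through $\partial_{\sigma'}\mathcal{U}^0_\epsilon$. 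What does work is to recognise, via (\ref{rest_term_ad_th}), that $J(\sigma')=-r^{0,-}(\sigma',-T)+b^{0,-}_{\sigma',-T}-\mathcal{U}^0_\epsilon(\sigma',-T)b^{0,-}_{-T,-T}$, so that up to terms controlled by Appendix~\ref{app:calc:b} the integral above is precisely $\epsilon^{-2}$ times the paper's $T^{33}$ integral, and is then handled by expanding $\tilde r^+$ once more via (\ref{rest_term_ad_th}) exactly as in Lemma~\ref{lemma_rest_terms_2}. In short, your detour eventually lands back on the paper's $T^{33}$ estimate; doing Duhamel first simply gets there without the detour and without ever needing an oscillatory--phase argument.
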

\begin{proof}
Let $\mathcal{U}_{\epsilon}(s,s')$, $(s\geq s')$, be the two-parameter group solving (\ref{evolution_eq_propagator}). Applying Duhamel's formula to (\ref{general_form_two_state_Lindbladian}), we can write
\begin{equation}\label{duhamel_formula}
\mathcal{U}_{\epsilon}(s,s')=\mathcal{U}_{\epsilon,0}(s,s')+\frac{1}{\epsilon}\int_{s'}^s \mathcal{U}_{\epsilon}(s,\tau)\,\frac{\gamma_{\tau}}{2}\,\mathcal{D}_{\tau}\,\mathcal{U}_{\epsilon,0}(\tau,s')\,\mathrm{d}\tau,
\end{equation}
where $\mathcal{U}_{\epsilon,0}(s,s')$ solves the equation
\begin{equation}\label{deq_U_0}
\epsilon\,\frac{\partial}{\partial s}\mathcal{U}_{\epsilon,0}(s,s')=\mathcal{L}_{0,s}\,\mathcal{U}_{\epsilon,0}(s,s').
\end{equation}
It can easily be seen by differentiation that the r.h.s.\ of (\ref{duhamel_formula}) satisfies Equation (\ref{evolution_eq_propagator}). Hence 
\begin{align}
p(\epsilon,\gamma,T)=&\tr(P_T^{+}\,\mathcal{U}_{\epsilon}(T,-T)P_{-T}^{-}) \nonumber \\
=&\tr(P_T^{+}\,\mathcal{U}_{\epsilon,0}(T,-T)P_{-T}^{-}) \nonumber \\
&+\frac{1}{2\epsilon}\int_{-T}^T  \gamma_{\tau}\tr(P_T^{+}\, \mathcal{U}_{\epsilon}(T,\tau)\,\mathcal{D}_{\tau}\,\mathcal{U}_{\epsilon,0}(\tau,-T)P_{-T}^{-})\,\mathrm{d}\tau \nonumber \\
=&\tr(P_T^{+}\,\mathcal{U}_{\epsilon,0}(T,-T)P_{-T}^{-}) \nonumber \\
&+\frac{1}{2\epsilon}\int_{-T}^T  \gamma_{\tau}\tr((\mathcal{U}_{\epsilon}^*(T,\tau)\,P_T^{+})\, \mathcal{D}_{\tau}\,\mathcal{U}_{\epsilon,0}(\tau,-T)P_{-T}^{-})\,\mathrm{d}\tau,\label{expansion_p}
\end{align}
where $\mathcal{U}_{\epsilon}^*$ is the dual operator with respect to the duality $\mathcal{B}(\mathcal{H})\cong (\mathcal{J}_1(\mathcal{H}))^*$. Note that the first term corresponds to the Landau-Zener formula, and hence provides the first term of (\ref{theorem_L_Z_generalized_formula}). Furthermore, note that the (formal) separation of this contribution corresponding to the Hamiltonian evolution in (\ref{expansion_p}) neither depends on the fact of treating a two-level system, nor on the specific form of $\mathcal{D}$; it is thus possible for any Lindbladian (\ref{Lindblad_op}).

We are left to show that the remainder of (\ref{expansion_p}), henceforth denoted by $p_{d}(\epsilon,\gamma,T)$, corresponds to the last two terms of (\ref{theorem_L_Z_generalized_formula}) in the limit $T\rightarrow \infty$. By Lemma \ref{ad_th_for_two_level_Lindbladian}, we may expand $\mathcal{U}_{\epsilon,0}(\tau,-T)P_{-T}^{-}$ as 
\begin{equation}\label{expansion_U_0}
\begin{aligned}
\mathcal{U}_{\epsilon,0}(\tau,-T)P_{-T}^{-}=P_{\tau}^{-} +\epsilon\, a^{-}_{\tau,-T,0}-\epsilon\,\mathcal{U}_{\epsilon,0}(\tau,-T)\, a^{-}_{-T,-T,0}+ \epsilon^2 r^{-}_{0}( \tau,-T),
\end{aligned}
\end{equation}
\begin{equation}\label{def_a_-_0}
a^{-}_{\tau,-T,0}\coloneqq \frac{g(iE^*_{\tau}+h.c.)}{16\,e_{\tau}^3},
\end{equation}
with $r^{-}_{0}( \tau,-T)$ computed as in (\ref{rest_term_ad_th}). More precisely, Lemma \ref{ad_th_for_two_level_Lindbladian} provides a solution $\rho^{-}(\tau)$ of (\ref{deq_U_0}) by setting $\gamma_{\tau}\equiv 0$ in (\ref{solution_ad_th_2}), (\ref{def_a_pm}), i.e.\
\begin{equation*}
\rho^{-}(\tau)=P_{\tau}^{-} +\epsilon\, a^{-}_{\tau,-T,0}+\epsilon^2 r^{-}_{0}( \tau,-T).
\end{equation*}
The expansion then follows by writing 
\begin{equation*}
\rho^{-}(\tau)=\mathcal{U}_{\epsilon,0}(\tau,-T)\rho^{-}(-T)=\mathcal{U}_{\epsilon,0}(\tau,-T)P_{-T}^{-}+\epsilon\,\mathcal{U}_{\epsilon,0}(\tau,-T)\, a^{-}_{-T,-T,0}.
\end{equation*}
The propagator $\mathcal{U}_{\epsilon}^*$ in turn satisfies
\begin{equation}\label{deq_U_star}
\epsilon\, \frac{\partial}{\partial s'}\mathcal{U}_{\epsilon}^*(s,s')=-\mathcal{L}_{s'}^*\,\mathcal{U}_{\epsilon}^*(s,s'),\quad (s\geq s'),
\end{equation}
which can be restated in terms of $\mathcal{U}_{\epsilon}^*(s,s')\eqqcolon \mathcal{V}(-s',-s)$ and $-s'\eqqcolon t$, $-s\eqqcolon t'$, i.e.\
\begin{equation}\label{deq_V}
\epsilon\, \frac{\partial}{\partial t}\mathcal{V}(t,t')=\mathcal{L}_{-t}^*\,\mathcal{V}(t,t'),\quad (t\geq t').
\end{equation}
Thus Theorem~\ref{adiabatic_theorem} applies. By Lemma \ref{ad_th_for_two_level_Lindbladian}, and more precisely (\ref{def_hat_a_pm}), the solution provided by Theorem~\ref{adiabatic_theorem} yields the expansion 
\begin{equation}\label{expansion_U_star}
\mathcal{U}_{\epsilon}^*(T,\tau)P_T^{+}=P_{\tau}^{+} +\, \epsilon\, \hat{a}^{+}_{T,\tau}-\,\epsilon\,\mathcal{U}_{\epsilon}^*(T,\tau)\hat{a}^{+}_{T,T}+ \epsilon^2 \tilde{r}^{+}(T,\tau),
\end{equation}
where 
\begin{equation}
\begin{gathered}
\hat{a}^{+}_{T,\tau}= \tilde{a}^{+}_{T,\tau}+ \frac{ g^2(P_{\tau}^{+}-P_{\tau}^{-})}{64}\int_{T}^{\tau} \frac{\gamma_{\zeta}}{(1+\gamma_{\zeta}^2)e_{\zeta}^5}\,\mathrm{d} \zeta,\\
\tilde{a}^{+}_{T,\tau}\coloneqq  -\frac{ g ((i-\gamma_{\tau})E^*_{\tau}+h.c.)}{16(1+\gamma_{\tau}^2)e_{\tau}^3}
\end{gathered}
\end{equation}
and $\tilde{r}^{+}(T,\tau)$ defined in (\ref{rest_term_ad_th}), with $\tilde{b}_{T,\tau}$ defined as in (\ref{def_b_j}). Note that by Appendices \ref{app:calc:b}-\ref{app:calc:b:dot}, $\Vert \tilde{b}^{+}_{s',s}\Vert_1=O(s^{-3})$ and $\Vert \dot{\tilde{b}}^{+}_{s',s}\Vert_1\leq c\,e_{s}^{-3}$, $c>0$, and hence $\tilde{r}^{+}(s',s)$ is uniformly bounded in $s,s'$.

We insert (\ref{expansion_U_0}, \ref{expansion_U_star}) into (\ref{expansion_p}) and observe that $P_{\tau}^{\pm}$ is in the kernel of $\mathcal{D}_{\tau}=\mathcal{D}_{\tau}^*$. We deduce 
\begin{equation}\label{expansion_p_d}
\begin{aligned}
p_{d}(\epsilon,\gamma,T)=&\epsilon\int_{-T}^T  \frac{\gamma_{\tau}}{2}\tr([\tilde{a}_{T,\tau}^{+}-\mathcal{U}_{\epsilon}^*(T,\tau)\,\tilde{a}_{T,T}^{+}+\epsilon\,\tilde{r}^{+}(T,\tau)]\,\\
&\times \mathcal{D}_{\tau}\,[a_{\tau,-T,0}^{-}-\mathcal{U}_{\epsilon,0}(\tau,-T)\,a_{-T,-T,0}^{-}+\epsilon\,r^{-}_0(\tau,-T)])\,\mathrm{d}\tau.
\end{aligned}
\end{equation}

Expanding the expression inside the trace and using the linearity of the latter we can write
\begin{equation*}
p_{d}(\epsilon,\gamma,T)=\epsilon \int_{-T}^T \frac{\gamma_{\tau}}{2}\,\sum_{i,j=1}^3 \tr(T_{\tau}^{ij})\,\,\mathrm{d}\tau,
\end{equation*}
where $T_{\tau}^{ij}$ denotes the combination of the $i$-th term of the first square bracket and the $j$-th term of the second. 

Plugging in definitions and using 
\begin{equation}\label{D_on_E}
\mathcal{D}_{\tau}E_{\tau}=-4\,e_{\tau}E_{\tau}
\end{equation}
one easily finds 
\begin{equation*}\label{eq_number_1}
\epsilon\int_{-T}^T\frac{\gamma_{\tau}}{2}\tr(T_{\tau}^{11})\,\mathrm{d}\tau
=\frac{\epsilon g^2}{64}\int_{-T}^T\frac{\gamma_{\tau}}{(1+\gamma_{\tau}^2)e_{\tau}^5}\mathrm{d}\tau.
\end{equation*}
Moreover, we have (\ref{explicit_numerator}) by (\ref{P_dot_lin_comb}) and therefore
\begin{equation}\label{eq_number}
\epsilon\int_{-T}^T\frac{\gamma_{\tau}}{2}\tr(T_{\tau}^{11})\,\mathrm{d}\tau
=\epsilon\int_{-T}^T\frac{\gamma_{\tau}}{1+\gamma_{\tau}^2}\frac{\tr(P_{\tau}^{-}(\dot{P}_{\tau}^{+})^2P_{\tau}^{-})}{e_{\tau}}\mathrm{d}\tau,
\end{equation}
which yields the term of first order in $\epsilon$ of (\ref{theorem_L_Z_generalized_formula}). 

Furthermore, by the lemmas below, the integrals corresponding to the $T_{\tau}^{ij}$'s left over either vanish in the limit $T\rightarrow \infty$ (Lemma \ref{lemma_vanishing_terms}), or contribute terms which are $ O(\gamma\epsilon^2)$ (Lemmas \ref{lemma_rest_terms_1}, \ref{lemma_rest_terms_2}). 
\end{proof}
\begin{lemma}\label{lemma_vanishing_terms}
For all $j\in\{1,2,3\}$,
\begin{equation*}
\lim_{T\rightarrow\infty}\int_{-T}^T\frac{\gamma_{\tau}}{2}\tr(T_{\tau}^{2j})\,\mathrm{d}\tau=\lim_{T\rightarrow\infty}\int_{-T}^T\frac{\gamma_{\tau}}{2}\tr(T_{\tau}^{j2})\,\mathrm{d}\tau=0\,.
\end{equation*}
\end{lemma}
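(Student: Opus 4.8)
The plan is to bound each integrand $\tr(T_\tau^{2j})$ and $\tr(T_\tau^{j2})$ by exploiting that the amplitudes evaluated at the endpoints, $\tilde a^{+}_{T,T}$ and $a^{-}_{-T,-T,0}$, have size $e_{\pm T}^{-3}=O(T^{-3})$ \emph{uniformly} in the integration variable $\tau\in[-T,T]$; since the interval has length $2T$ and the remaining factors grow at most polynomially in $\tau$, the $\tau$-integral then tends to $0$ as $T\to\infty$.

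First I would collect the elementary norm estimates that make this work. (i) $\Vert\mathcal U_\epsilon^{*}(T,\tau)\Vert=\Vert\mathcal U_\epsilon(T,\tau)\Vert_1=1$ and $\Vert\mathcal U_{\epsilon,0}(\tau,-T)\Vert_1=1$, by complete positivity and trace preservation. (ii) $\Vert\mathcal D_\tau\Vert\le 4e_\tau$ both on $\mathcal B(\mathcal H)$ and on $\mathcal J_1(\mathcal H)$, since $\Vert\sqrt{H_\tau}\Vert^{2}=e_\tau$. (iii) From (\ref{def_hat_a_pm}), (\ref{def_a_-_0}) and $\Vert E_s\Vert=1$ one gets $\Vert\tilde a^{+}_{T,T}\Vert=O(e_T^{-3})$ and $\Vert a^{-}_{-T,-T,0}\Vert=O(e_{-T}^{-3})$, both $O(T^{-3})$. (iv) $\Vert\tilde a^{+}_{T,\tau}\Vert$ and $\Vert a^{-}_{\tau,-T,0}\Vert$ are $O(e_\tau^{-3})$, and by (\ref{D_on_E}) $\mathcal D_\tau a^{-}_{\tau,-T,0}=-4e_\tau\,a^{-}_{\tau,-T,0}$. (v) $r^{-}_{0}(\tau,-T)$ and $\tilde r^{+}(T,\tau)$ are uniformly bounded in all their arguments, by the strengthened remainder estimate in Lemma \ref{ad_th_for_two_level_Lindbladian} and the bounds quoted in the proof of Theorem \ref{theorem_L_Z_generalized}.

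Next, for $\tr(T_\tau^{j2})$, $j\in\{1,2,3\}$, I would write $T_\tau^{j2}=A_j\,\mathcal D_\tau B_2$ with $B_2=-\mathcal U_{\epsilon,0}(\tau,-T)\,a^{-}_{-T,-T,0}$, so that $\Vert B_2\Vert_1=O(T^{-3})$ and $\Vert\mathcal D_\tau B_2\Vert_1=O(e_\tau T^{-3})$ by (i)--(ii). Each $A_j$ has operator norm $O(1)$ (indeed $O(e_\tau^{-3})$ for $j=1$ and $O(T^{-3})$ for $j=2$), hence $|\tr(T_\tau^{j2})|\le\Vert A_j\Vert\,\Vert\mathcal D_\tau B_2\Vert_1=O(e_\tau T^{-3})$; multiplying by $\gamma_\tau/2\le\gamma/2$ and using $\int_{-T}^{T}e_\tau\,\mathrm d\tau=O(T^{2})$ gives a contribution $O(T^{-1})\to0$. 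Symmetrically, for $\tr(T_\tau^{2j})$ I would write $T_\tau^{2j}=A_2\,\mathcal D_\tau B_j$ with $A_2=-\mathcal U_\epsilon^{*}(T,\tau)\,\tilde a^{+}_{T,T}$, so $\Vert A_2\Vert=O(T^{-3})$; then $\Vert\mathcal D_\tau B_1\Vert_1=O(e_\tau^{-2})$ by (\ref{D_on_E}), $\Vert\mathcal D_\tau B_2\Vert_1=O(e_\tau T^{-3})$, and $\Vert\mathcal D_\tau B_3\Vert_1=O(e_\tau)$, so in every case $|\tr(T_\tau^{2j})|$ is $O(T^{-3})$ times a function of $\tau$ whose integral over $[-T,T]$ is $O(T^{2})$ at worst, again giving $O(T^{-1})\to0$.

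The one delicate point is the competition between the linear growth $\Vert\mathcal D_\tau\Vert=O(e_\tau)=O(|\tau|)$ and the $O(T^{-3})$ decay of the boundary amplitude integrated over a window of length $2T$: this barely works, since $\int_{-T}^{T}e_\tau\,\mathrm d\tau=O(T^{2})$ makes the product $O(T^{-1})$. I therefore expect the (mild) crux to be arranging the power counting so that the linear growth of $\mathcal D_\tau$ is always dominated, which it is, thanks to the cubic decay $\dot P_s\sim e_s^{-2}$ composed with $\mathcal L_s^{-1}\sim e_s^{-1}$ underlying every amplitude; the rest is a routine application of $|\tr(AB)|\le\Vert A\Vert\,\Vert B\Vert_1$ together with the bounds (i)--(v).
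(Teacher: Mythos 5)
Your proposal is correct and follows essentially the same route as the paper: the inequality $|\tr(AB)|\le \Vert A\Vert\,\Vert B\Vert_1$, the CPTP norm bounds $\Vert\mathcal{U}\Vert_1=1$, the explicit norms $\Vert a^{-}_{\tau,-T,0}\Vert_1,\Vert\tilde a^{+}_{T,\tau}\Vert_1=O(e_\tau^{-3})$ and $\Vert\mathcal{D}_\tau\Vert_1\le 4e_\tau$, and the $O(T^{-3})$ decay of the endpoint amplitudes $\tilde a^{+}_{T,T}$, $a^{-}_{-T,-T,0}$. Your explicit power counting (in particular the observation that the worst case is $O(T^{-3})\cdot\int_{-T}^{T}e_\tau\,\mathrm{d}\tau=O(T^{-1})$) is in fact spelled out more carefully than in the paper's rather terse conclusion, but it is the same argument.
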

\begin{proof}[Proof of Lemma \ref{lemma_vanishing_terms}]
The relevant $T_{\tau}^{ij}$ are 
\begin{align*}
T_{\tau}^{12}=&-\tilde{a}_{T,\tau}^{+}\,\mathcal{D}_{\tau}\,\mathcal{U}_{\epsilon,0}(\tau,-T)\,a_{-T,-T,0}^{-},\\
T_{\tau}^{21}=&-\mathcal{U}_{\epsilon}^*(T,\tau)\,\tilde{a}_{T,T}^{+}\,\mathcal{D}_{\tau}\,a_{\tau,-T,0}^{-},\\
T_{\tau}^{22}=&\mathcal{U}_{\epsilon}^*(T,\tau)\,\tilde{a}_{T,T}^{+}\,\mathcal{D}_{\tau}\,\mathcal{U}_{\epsilon,0}(\tau,-T)\,a_{-T,-T,0}^{-},\\
T_{\tau}^{23}=&-\epsilon\, \mathcal{U}_{\epsilon}^*(T,\tau)\,\tilde{a}_{T,T}^{+}\,\mathcal{D}_{\tau}\,r_0^{-}(\tau,-T),\\
T_{\tau}^{32}=&-\epsilon\,\tilde{r}^{+}(T,\tau)\,\mathcal{D}_{\tau}\,\mathcal{U}_{\epsilon,0}(\tau,-T)\,a_{-T,-T,0}^{-}.
\end{align*}
Note that for $A\in \mathcal{B}(\mathcal{H})$ and $B\in \mathcal{J}_1(\mathcal{H})$,
\begin{equation}\label{eq_norm_inequality}
|\tr(AB)|\leq \Vert A\Vert\Vert B\Vert_1.
\end{equation}
Since $\Vert A\Vert\leq \Vert A\Vert_1$, we will henceforth only consider trace class norms. By computation,
\begin{equation}\label{a_0}
\Vert a_{\tau,-T,0}^{-}\Vert_1=\frac{g}{8\,e_{\tau}^3}
\end{equation}
and
\begin{equation}\label{a_*}
\Vert \tilde{a}_{T,\tau}^{+}\Vert_1=\frac{g}{8\sqrt{1+\gamma_{\tau}^2}e_{\tau}^3}.
\end{equation}
Furthermore
\begin{equation}\label{norm_D}
\Vert\mathcal{D}_{\tau}\Vert_1\leq 4\,e_{\tau},
\end{equation}
see Appendix \ref{app:calc:D}. Hence the statement follows by (\ref{norm_U}), $e_T\rightarrow \infty$ as $T\rightarrow \pm \infty$, and the boundedness of $\gamma_{\tau}$ and of the remainder terms.
\end{proof}
\begin{lemma}\label{lemma_rest_terms_1}
\begin{equation*}
\sup_T\,\epsilon\int_{-T}^T\frac{\gamma_{\tau}}{2}\tr(T_{\tau}^{13}+T_{\tau}^{31})\,\mathrm{d}\tau\leq \, C\,\gamma\,\epsilon^2\,,
\end{equation*}
for some constant $C>0$.
\end{lemma}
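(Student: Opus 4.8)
The plan is to run the same trace-norm estimate as in the proof of Lemma~\ref{lemma_vanishing_terms}, but now keeping track of the powers of $\epsilon$ and of $\gamma$ instead of merely letting $T\to\infty$. Reading off the two relevant terms from the expansion (\ref{expansion_p_d}), and recalling that the third entry of each square bracket there carries a factor $\epsilon$, one has
\[
T_\tau^{13}=\epsilon\,\tilde{a}_{T,\tau}^{+}\,\mathcal{D}_\tau\,r_0^{-}(\tau,-T),\qquad
T_\tau^{31}=\epsilon\,\tilde{r}^{+}(T,\tau)\,\mathcal{D}_\tau\,a_{\tau,-T,0}^{-}.
\]
Thus the quantity in the statement carries an explicit factor $\epsilon^2$ (one $\epsilon$ from the prefactor of the integral defining $p_d$, one from inside $T_\tau^{13},T_\tau^{31}$), and the claim reduces to producing a pointwise bound of the shape $|\tr T_\tau^{13}|+|\tr T_\tau^{31}|\le \epsilon\,h(\tau)$ with $h$ independent of $T$ and integrable over $\mathbb{R}$, after which one estimates $\gamma_\tau\le\gamma$ and integrates.

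For the pointwise bound I would apply (\ref{eq_norm_inequality}) together with $\Vert A\Vert\le\Vert A\Vert_1$, writing $|\tr(\tilde{a}_{T,\tau}^{+}\mathcal{D}_\tau r_0^{-}(\tau,-T))|\le\Vert\tilde{a}_{T,\tau}^{+}\Vert_1\Vert\mathcal{D}_\tau\Vert_1\Vert r_0^{-}(\tau,-T)\Vert_1$ and the analogous estimate for $T_\tau^{31}$. Then I would substitute the bounds already recorded in the excerpt: $\Vert\tilde{a}_{T,\tau}^{+}\Vert_1\le g/(8e_\tau^3)$ from (\ref{a_*}), $\Vert a_{\tau,-T,0}^{-}\Vert_1=g/(8e_\tau^3)$ from (\ref{a_0}), $\Vert\mathcal{D}_\tau\Vert_1\le 4e_\tau$ from (\ref{norm_D}), together with a common uniform bound $c_r$ on $\Vert\tilde{r}^{+}(T,\tau)\Vert_1$ and $\Vert r_0^{-}(\tau,-T)\Vert_1$, valid for all $\tau,T$ by the decay bounds on $b^{\pm},\dot{b}^{\pm},\tilde{b}^{+},\dot{\tilde{b}}^{+}$ recorded in the proofs of Lemma~\ref{ad_th_for_two_level_Lindbladian} and Theorem~\ref{theorem_L_Z_generalized} (with $\gamma\equiv 0$ for $r_0^{-}$). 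After one power of $e_\tau$ from $\mathcal{D}_\tau$ cancels the $e_\tau^{-3}$ from $\tilde a_{T,\tau}^{+}$ resp. $a_{\tau,-T,0}^{-}$, this gives $|\tr T_\tau^{13}|+|\tr T_\tau^{31}|\le\epsilon\,g c_r/(2 e_\tau^2)$, i.e.\ $h(\tau)=g c_r/(2e_\tau^2)$.

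To conclude I would bound $\gamma_\tau\le\gamma$ and use the elementary fact $\int_{-\infty}^{\infty}e_\tau^{-2}\,\mathrm{d}\tau=\int_{-\infty}^{\infty}4(\tau^2+g^2)^{-1}\,\mathrm{d}\tau=4\pi/g<\infty$; since $h$ is integrable over all of $\mathbb{R}$ and $T$-independent, the resulting bound is uniform in $T$ and equals $C\gamma\epsilon^2$ with $C=\pi c_r$, so that $C$ depends only on $g$ and on the bounds on $\gamma_\tau$ and its first two derivatives (through $c_r$). The one genuinely nontrivial ingredient — the "hard step" here — is precisely the uniform boundedness of the remainders $\tilde{r}^{+}(T,\tau)$ and $r_0^{-}(\tau,-T)$ over the whole real line rather than only on compact $\tau$-intervals; this is the strengthening afforded by Lemma~\ref{ad_th_for_two_level_Lindbladian}, and it is what allows the $e_\tau^{-2}$-decay of the integrand to be integrated against $\gamma_\tau\le\gamma$ with no $T$-dependent loss.
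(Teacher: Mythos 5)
Your argument is correct and follows exactly the route the paper takes: the paper's proof likewise identifies $T_\tau^{13}=\epsilon\,\tilde{a}_{T,\tau}^{+}\mathcal{D}_\tau r_0^{-}(\tau,-T)$ and $T_\tau^{31}=\epsilon\,\tilde{r}^{+}(T,\tau)\mathcal{D}_\tau a_{\tau,-T,0}^{-}$ and then simply invokes the bounds (\ref{a_0})--(\ref{norm_D}) together with the uniform boundedness of the remainders and the integrability of $e_\tau^{-2}$. You have merely made the constants and the integration explicit (a harmless factor of $2$ aside), and you correctly single out the $T$-uniform bound on $\tilde{r}^{+}$ and $r_0^{-}$ as the essential input.
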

\begin{proof}
We have
\begin{equation*}
T_{\tau}^{13}=\epsilon\,  \tilde{a}_{T,\tau}^{+}\,\mathcal{D}_{\tau}\,r_0^{-}(\tau,-T)
\end{equation*}
and 
\begin{equation*}
T_{\tau}^{31}=\epsilon \, \tilde{r}^{+}(T,\tau)\,\mathcal{D}_{\tau}\,a_{\tau,-T,0}^{-}\,.
\end{equation*}
Thus the inequality follows from (\ref{a_0}-\ref{norm_D}) and integration.
\end{proof}
\begin{lemma}\label{lemma_rest_terms_2}
\begin{equation}\label{T_33}
\sup_{T}\,\epsilon\int_{-T}^T\frac{\gamma_{\tau}}{2}\tr(T_{\tau}^{33})\,\mathrm{d}\tau \leq \tilde{C}\,\gamma\, \epsilon^3\,,
\end{equation}
for some constant $\tilde{C}>0$.
\end{lemma}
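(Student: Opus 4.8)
The plan is as follows. Unwinding the definitions in the proof of Theorem~\ref{theorem_L_Z_generalized}, the $(3,3)$-term is $T_\tau^{33}=\epsilon^2\,\tilde r^+(T,\tau)\,\mathcal D_\tau\,r_0^-(\tau,-T)$, so that the quantity in (\ref{T_33}) equals $\epsilon^3\int_{-T}^T\tfrac{\gamma_\tau}{2}\tr\big(\tilde r^+(T,\tau)\,\mathcal D_\tau\,r_0^-(\tau,-T)\big)\,\mathrm d\tau$, and it suffices to bound the remaining integral by a constant times $\gamma$, uniformly in $T$. The naive estimate via $\|\mathcal D_\tau\|_1\le 4e_\tau$ and the mere uniform boundedness of $\tilde r^+$ and $r_0^-$ is useless, since $\int_{-T}^T e_\tau\,\mathrm d\tau$ diverges. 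The argument rests on two structural observations. First, by (\ref{D_on_E}) the superoperator $\mathcal D_\tau$ kills the kernel $\ran\mathcal P_\tau$ of $\mathcal L_{0,\tau}$ and acts as $-4e_\tau$ on its range $\ran\mathcal Q_\tau$, spanned by $E_\tau$ and $E_\tau^*$, so that
\[
|\tr(\tilde r^+\mathcal D_\tau r_0^-)|\le 4e_\tau\,\|\mathcal Q_\tau\tilde r^+(T,\tau)\|_1\,\|\mathcal Q_\tau r_0^-(\tau,-T)\|_1 ,
\]
and only the off-diagonal components of the two remainders contribute. Second, at any $\tau$ at least one of $r_0^-(\tau,-T)$ and $\tilde r^+(T,\tau)$ is evaluated close to its initial time --- the former as $\tau\to-\infty$, the latter as $\tau\to+\infty$ --- and its off-diagonal part is then $O(e_\tau^{-3})$.

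Accordingly I would split the integration region into $(-\infty,-M]$, $[-M,M]$ and $[M,\infty)$, with $M$ a constant depending only on $g$. On the bulk everything is bounded and $\gamma_\tau\le\gamma$, so that region contributes $\le C\gamma$. On $(-\infty,-M]$ I would use $\|\mathcal Q_\tau\tilde r^+(T,\tau)\|_1\le C$ together with $\|\mathcal Q_\tau r_0^-(\tau,-T)\|_1=O(e_\tau^{-3}+e_T^{-3})$, giving a summand $4e_\tau\|\mathcal Q_\tau r_0^-\|_1=O(e_\tau^{-2}+e_\tau e_T^{-3})$ whose integral over $[-T,-M]$ is $O(1)+O(T^2e_T^{-3})=O(1)$, uniformly in $T$; on $[M,\infty)$ the mirror estimate applies, with the roles of the two remainders interchanged. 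To get the off-diagonal decay, write by (\ref{rest_term_ad_th})
\[
r_0^-(\tau,-T)=b^-_{\tau,-T,0}-\mathcal U_{\epsilon,0}(\tau,-T)\,b^-_{-T,-T,0}-\int_{-T}^\tau\mathcal U_{\epsilon,0}(\tau,\zeta)\,\dot b^-_{\zeta,-T,0}\,\mathrm d\zeta .
\]
The first two summands are $O(e_\tau^{-3})$ and $O(e_T^{-3})$ respectively, by Appendix~\ref{app:calc:b} and (\ref{norm_U}).

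The crux is the Duhamel integral: its $\|\cdot\|_1$-norm is only $O(\int_{-T}^\tau e_\zeta^{-3}\,\mathrm d\zeta)=O(e_\tau^{-2})$ as $\tau\to-\infty$, which against the factor $4e_\tau$ fails to be integrable; one must show its off-diagonal part is in fact $O(\epsilon e_\tau^{-3})$. Decompose $\dot b^-_{\zeta,-T,0}$ into its $\mathcal P_\zeta$- and $\mathcal Q_\zeta$-parts, of sizes $O(e_\zeta^{-5})$ and $O(e_\zeta^{-3})$ by Appendix~\ref{app:calc:b:dot}. The diagonal part, transported by $\mathcal U_{\epsilon,0}$, leaks into $\ran\mathcal Q_\tau$ only at order $\epsilon$ by the adiabatic theorem (Lemma~\ref{ad_th_for_two_level_Lindbladian} with $\gamma\equiv 0$), hence contributes negligibly. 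The off-diagonal part $\alpha(\zeta)E_\zeta+\beta(\zeta)E_\zeta^*$ has smooth coefficients of size $O(e_\zeta^{-3})$ and, since $E_\zeta$ and $E_\zeta^*$ are eigenvectors of $\mathcal L_{0,\zeta}$ with eigenvalues $\mp 2ie_\zeta$ bounded away from $0$, is carried by $\mathcal U_{\epsilon,0}(\tau,\zeta)$ to $e^{\mp\frac{2i}{\epsilon}\int_\zeta^\tau e_\sigma\,\mathrm d\sigma}E_\tau$ (resp.\ $E_\tau^*$) up to an $O(\epsilon)$ correction; a single integration by parts against the nonstationary phase then produces the extra $\epsilon$ together with a remainder $O(\epsilon(e_\tau^{-4}+e_T^{-4}+\int_{-T}^\tau e_\zeta^{-4}\,\mathrm d\zeta))=O(\epsilon e_\tau^{-3})$ as $\tau\to-\infty$. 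The symmetric statement for $\tilde r^+$ near $+\infty$ follows from the equation (\ref{deq_V}) for $\mathcal V$ and the Appendix bounds $\|\tilde b^+_{T,\tau}\|_1=O(e_\tau^{-3})$, $\|\dot{\tilde b}^+_{T,\tau}\|_1\le c\,e_\tau^{-3}$. Summing the three regions bounds the integral by $\tilde C\gamma$ uniformly in $T$, which gives the claim. The main obstacle is precisely this nonstationary-phase improvement of the Duhamel term; everything else is bookkeeping with the estimates (\ref{a_0})--(\ref{norm_D}) and the Appendices.
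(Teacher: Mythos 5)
Your strategy is genuinely different from the paper's at the decisive point, and one step of your sketch is, as stated, too weak to deliver the $\sup_T$ in (\ref{T_33}). The paper proceeds exactly as you do up to the expansion of both remainders via (\ref{rest_term_ad_th}) and the disposal of the terms containing $b^{-}_{\tau,-T,0}$, $\tilde{b}^{+}_{T,\tau}$ and the propagated initial data; but for the remaining (Duhamel)$\times$(Duhamel) term it uses no oscillation argument and no diagonal/off-diagonal splitting at all: it is plain norm bookkeeping, and the decay needed to beat the factor $4e_\tau$ from (\ref{norm_D}) is available from the exact expressions of Appendix \ref{app:calc:b}. Indeed, for $\gamma\equiv 0$ the formula there collapses to $b^{-}_{s,s',0}$ proportional to $s\,e_s^{-6}(E_s+E_s^*)$, so $\dot{b}^{-}_{\zeta,-T,0}=O(e_\zeta^{-6})$ and the Duhamel integral in $r_0^{-}$ is $O(e_\tau^{-5})$ as $\tau\to-\infty$; near $+\infty$ the factor $\int_T^{\zeta}\gamma_\sigma(1+\gamma_\sigma^2)^{-1}e_\sigma^{-5}\,\mathrm{d}\sigma$ makes $\dot{\tilde{b}}^{+}_{T,\zeta}=O(e_\zeta^{-4})$, so the Duhamel integral in $\tilde{r}^{+}$ is $O(e_\tau^{-3})$ as $\tau\to+\infty$. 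These powers beat $4e_\tau$ outright, giving an integrand $O(e_\tau^{-2})$ at both ends uniformly in $T$. Your observation that the quoted bound $\Vert\dot b\Vert_1\le c\,e^{-3}$ alone is insufficient against $4e_\tau$ is correct, but the cure lies in the explicit formulas rather than in a phase analysis.

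Within your own route the gap is the unproven claim that the transported off-diagonal source equals a pure phase times $E_\tau$ ``up to an $O(\epsilon)$ correction'' (and likewise that the diagonal part ``leaks only at order $\epsilon$''). A $\tau$-uniform $O(\epsilon)$ is not enough: with your coarse source bound $O(e_\zeta^{-3})$, that correction term contributes $O(\epsilon\int_{-T}^{\tau}e_\zeta^{-3}\,\mathrm{d}\zeta)=O(\epsilon e_\tau^{-2})$ to $\Vert\mathcal{Q}_\tau r_0^{-}\Vert_1$, which against $4e_\tau$ and the bounded second factor yields an integrand of order $\gamma\epsilon\,e_\tau^{-1}$, hence a contribution of order $\gamma\epsilon^4\log T$ to the left-hand side of (\ref{T_33}) --- not bounded uniformly in $T$. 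To close the argument along your lines you must show that these adiabatic corrections carry decay of their own (they do, since the coupling between the diagonal and off-diagonal sectors is governed by $\dot{E}_\zeta,\dot{P}^{\pm}_\zeta=O(e_\zeta^{-2})$), or simply use the sharper decay of $\dot{b}^{-}_{\zeta,-T,0}$ and $\dot{\tilde{b}}^{+}_{T,\zeta}$ noted above, in which case the entire non-stationary-phase apparatus --- and the three-region splitting --- becomes unnecessary. The same caveat applies on $[M,\infty)$, where in addition the transport is by the full dissipative propagator $\mathcal{U}^*_\epsilon$, so the ``phase'' is $\exp\bigl(\tfrac1\epsilon\int 2(\pm i-\gamma)e\bigr)$; the integration by parts still gains $\epsilon/e_\zeta$ there, but this should be said explicitly.
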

\begin{proof}
We have
\begin{equation*}
T_{\tau}^{33}=\epsilon^2 \tilde{r}^{+}(T,\tau)\,\mathcal{D}_{\tau}\,r_0^{-}(\tau,-T).
\end{equation*}
While $r^{+}(T,\tau)$ and $r_0^{-}(\tau,-T)$ are uniformly bounded, this alone does not provide the necessary decay of $T_{\tau}^{33}$. Let us thus expand these factors according to (\ref{rest_term_ad_th}), i.e.\ 
\begin{equation}\label{remainder_0}
r_0^{-}(\tau,-T)=b^{-}_{\tau,-T,0}-\mathcal{U}_{\epsilon,0}(\tau,-T)b^{-}_{-T,-T,0}-\int_{-T}^{\tau} \mathcal{U}_{\epsilon,0}(\tau,\zeta)\dot{b}^{-}_{\zeta,-T,0}\,\mathrm{d}\zeta
\end{equation}
and 
\begin{equation}\label{remainder_*}
\tilde{r}^{+}(T,\tau)=\tilde{b}^{+}_{T,\tau}-\mathcal{U}_{\epsilon}^*(\tau,T)\tilde{b}^{+}_{T,T}-\int_{T}^{\tau} \mathcal{U}_{\epsilon}^*(\tau,\zeta)\dot{\tilde{b}}^{+}_{T,\zeta}\,\mathrm{d}\tau,
\end{equation}
where $b^{-}_{\tau,-T,0}$ and $\tilde{b}^{+}_{T,\tau}$ are defined in (\ref{def_b_j}). The first terms on the r.h.s of equations (\ref{remainder_0}) and (\ref{remainder_*}) are bounded by a constant times $e_{\tau}^{-3}$, whereas the second terms are $O(T^{-3})$ (see Appendix \ref{app:calc:b}), whence any term containing them contributes $O(T^{-2})$ to the constant in (\ref{T_33}). 

We may thus pretend that only the last terms are present on the right hand side of (\ref{remainder_0}, \ref{remainder_*}). We observe that $\dot{b}^{-}_{\zeta,-T,0}, \dot{\tilde{b}}^{+}_{T,\zeta}=O(e_{\zeta}^{-3})$ by Appendix \ref{app:calc:b:dot}, so that the two integrals are uniformly bounded in $T$. Moreover, they are $O(\tau^{-2})$ for $\tau \rightarrow -\infty$, respectively for $\tau \rightarrow \infty$, uniformly in $T$. Hence $T_{\tau}^{33}=O(\tau^{-2})$, ($\tau \rightarrow \pm \infty$).
\end{proof}

\end{subsection}


\begin{section}{Extensions}
\label{sec:extensions}
We chose to present our results in the physically most relevant case of an avoided two level crossing. However, a brief inspection of our method shows that it is applicable to more general settings. Let $\mathcal{L}_s = -i[H_s,\cdot] + (\gamma/2) \,\mathcal{D}_s$ be a dephasing Lindbladian with two (among many) stationary projections $P^+_s$ and $P^-_s$. Then the formula (\ref{expansion_p}) for the probability of a transition between these two levels, 
\begin{equation*}
p(\epsilon, \,\gamma, T)=p_c(\epsilon, \,\gamma, T)
 +\frac{1}{2\epsilon}\int_{-T}^T  \gamma \tr((\mathcal{U}_{\epsilon}^*(T,\tau)\,P_T^{+})\, \mathcal{D}_{\tau}\,\mathcal{U}_{\epsilon,0}(\tau,-T)P_{-T}^{-})\,\mathrm{d}\tau,
\end{equation*}
remains unchanged. The coherent ($p_c(\epsilon, \,\gamma, T) := \tr(P_T^{+}\,\mathcal{U}_{\epsilon,0}(T,-T)P_{-T}^{-})$) and incoherent contribution to the tunneling add. Moreover, provided that the error terms in the expansion (\ref{solution_ad_th}) have sufficient decay, the latter is of order $\epsilon$. 

Let us now consider a setting where this is the case. Suppose that $\mathcal{L}_s$ is a minimally degenerate dephasing Lindbladian acting on a finite dimensional space such that $\mathcal{L}_s^{-1} (\mathbb{1}-\mathcal{P}_s)$ is uniformly bounded for $s \in \mathbb{R}$, and that $\mathcal{L}_s$ is three times differentiable. Then Theorem~6 in \cite{AFGG12}, which in particular is a generalization of Theorem~\ref{adiabatic_theorem} w.r.t.\ differentiability conditions, may be applied. We thus arrive at formula (\ref{expansion_p_d}), with $a,\,b$ and $r$ computed with respect to the new Lindbladian. Assume now that $\mathcal{L}_s$ has finite limits $\lim_{s \to \pm\infty} \mathcal{L}_s$ and that on both ends
\begin{equation*}
\frac{\mathrm{d}^j \mathcal{L}_s}{\mathrm{d} s^j} = O(|s|^{-{j-1}}), \quad \mbox{for} \quad j=1,\,2,\,3.
\end{equation*}
The derivatives of $\mathcal{P}_s$, resp. $P^{\pm}_s$, inherit the decay property of the Lindbladian. Thus the remainder terms are uniformly bounded also in $s,s'$; and $\hat{a}_{T,T}$ and $a^{-}_{-T,\,-T,0}$ are of order $T^{-2}$. Hence Lemma \ref{lemma_vanishing_terms} holds true in this situation. Furthermore $T^{13},\,T^{31}$ and $T^{33}$ are integrable and hence contribute by an error $O(\gamma \epsilon^2)$ to the transition probability. It remains to compute the $T^{11}$ contribution. Let $P_s = P_s^+ + P_s^-$, then following the proof of Lemma \ref{lemma_generality_Lindbladian} we see that
\begin{equation*}
\mathcal{L}_s(P_s \rho P_s) = -i[h_s, P_s \rho P_s] - \frac{\gamma_s}{2}[\sqrt{h_s},[\sqrt{h_s}, P_s \rho P_s]]
\end{equation*}
for some Hamiltonian $h_s$ acting non-trivially only on the two levels $P^\pm_s$ and a function $\gamma_s$. The trace of the $T^{11}_s$ term depends only on this reduced Lindbladian and hence is again given by Eq. (\ref{eq_number}) with $e_s$ being the energy gap of $h_s$. In this way we arrive at the formula
\begin{equation*}
p(\epsilon, \,\gamma)=p_c(\epsilon, \,\gamma) + \epsilon \int_{-\infty}^{\infty}\,\frac{\gamma_{\tau}}{1+\gamma_{\tau}^2}\frac{\tr(P_{\tau}^{-}(\dot{P}_{\tau}^{+})^2P_{\tau}^{-})}{e_{\tau}}\,\mathrm{d}\tau +  O(\gamma\epsilon^2).
\end{equation*}
Details shall be presented elsewhere. Sufficient conditions under which the coherent part of the tunneling is exponentially small were given in \cite{Joye}.
\end{section}


\begin{appendix}
\begin{section}{Appendix}
\label{app:calculation:norms}
\begin{subsection}{}
\label{app:calc:b}
We have
\begin{align*}
b^{\pm}_{s,s'}=& \mathcal{L}_s^{-1}\dot{\mathcal{P}}_s\int_{s'}^s \mathcal{T}(s,\tau)\dot{\mathcal{P}}_{\tau}\mathcal{L}_{\tau}^{-1}\dot{P}^{\pm}_{\tau}\,\mathrm{d}\tau+\mathcal{L}_s^{-1}(\mathbb{1}-\mathcal{P}_s)\frac{\mathrm{d}}{\mathrm{d}s}(\mathcal{L}_s^{-1}\dot{P}^{\pm}_s)\\
=&\mp \frac{g^3((i-\gamma_s) E_{s}+h.c.)}{512(1+\gamma_{s}^2)e_s^3}\int_{s'}^s \frac{\gamma_{\tau}}{(1+\gamma_{\tau}^2)e_{\tau}^5}\,\mathrm{d}\tau\\
&\mp \Bigl(\frac{ g(i-\gamma_{s})}{32(1+\gamma_{s}^2)^2e_s^4}(\frac{3s (i-\gamma_{s})}{4\,e_s^2}+\dot{\gamma}_s+ \frac{2\dot{\gamma}_{s}\gamma_{s}(i-\gamma_{s}))}{1+\gamma_{s}^2}) E_{s}+h.c.\Bigr).
\end{align*}
This follows from (\ref{def_b_j}, \ref{P_dot_lin_comb}-\ref{action_dot_P}) and by using
\begin{gather}
\dot{\mathcal{P}}_sP_s^{\pm}=\pm \frac{g}{8e_s^2}(E_s+E_s^*)\,,\nonumber \\
\dot{E}_s=\dot{E}_s^*=-\frac{g}{8e_s^2}(P_s^{+}-P_s^{-})\,.\label{dote}
\end{gather}
Since $\gamma_s$ and $\dot{\gamma}_s$ are bounded, it follows that 
\begin{equation}\label{norm_b}
\Vert b^{\pm}_{s,s'}\Vert_1\leq c \,e_s^{-3},
\end{equation}
and hence $\Vert b^{\pm}_{s,s'}\Vert_1=O(|s|^{-3})$. In particular, (\ref{norm_b}) also holds for $b^{-}_{s,s',0}=b^{-}_{s,s'}(\gamma_s\equiv 0)$ and $\tilde{b}_{s',s}^{+}$. The implication for the latter can be seen by noting that complex conjugation of all functions and changing the sign of the integral in $b_{s,s'}^{+}$ yields $\tilde{b}_{s',s}^{+}$. 
\end{subsection}
\begin{subsection}{}
\label{app:calc:b:dot}
The derivative of the coefficients in $b^{\pm}_{s,s'}$ is bounded by a constant times $e_s^{-3}$ as well, since $\ddot{\gamma}_s$ is bounded. Together with (\ref{dote}) it follows 
\begin{equation}\label{norm_b_dot}
\Vert \dot{b}^{\pm}_{s,s'}\Vert_1\leq c \,e_s^{-3},
\end{equation}
and the corresponding inequality for $\dot{b}^{-}_{s,s',0}$ follows. By Appendix \ref{app:calc:b}, the bound also holds for $\dot{\tilde{b}}_{s',s}^{+}$.
\end{subsection}
\begin{subsection}{}
\label{app:calc:D}
We claim 
\begin{equation}\label{ineq_D}
\Vert \mathcal{D}_{\tau}\Vert_1\leq 4e_{\tau}.
\end{equation}
\begin{proof}
We decompose a general matrix $\rho\in \mathcal{J}_1(\mathcal{H})$ as
\begin{equation*}
\rho=a_{+}P_{\tau}^{+}+a_{-}P_{\tau}^{-}+b_{+}E_{\tau}+b_{-}E_{\tau}^*
\end{equation*}
with $a_{\pm},b_{\pm}\in\mathbb{C}$. We shall show 
\begin{gather}
\Vert b_{+}E_{\tau}+b_{-}E_{\tau}^*\Vert_1=|b_{+}|+|b_{-}|,\label{ineq_D_1}\\
\Vert \rho \Vert_1 \geq |b_{+}|+|b_{-}|,\label{ineq_D_2}
\end{gather}
from which (\ref{ineq_D}) follows in view of 
\begin{equation*}
\mathcal{D}_{\tau}\rho=-4e_{\tau}(b_{+}E_{\tau}+b_{-}E_{\tau}^*)\,,
\end{equation*}
see (\ref{D_on_E}).
If $a_{\pm}=0$ we have $\rho^*\rho=|b_{+}|^2P_{\tau}^{-}+|b_{-}|^2P_{\tau}^{+}$ and $|\rho|=|b_{+}|P_{\tau}^{-}+|b_{-}|P_{\tau}^{+}$, from which (\ref{ineq_D_1}) follows. In the general case, consider the unitary operator
\begin{equation*}
U = \frac{\bar{b}_-}{{|b_-|}} E_\tau + \frac{\bar{b}_+}{{|b_+|}} E^*_\tau\,.
\end{equation*}
We have $\tr(U \rho) = |b_+| + |b_-|$ and hence (\ref{ineq_D_2}) follows from the variational formula 
\begin{equation*}
||\rho||_1 = \sup_{||X||=1} |\tr(X \rho)|\,.
\end{equation*}
\end{proof}
\end{subsection}
\end{section}
\end{appendix}


\textbf{Acknowledgements:} We thank Gian Michele Graf, Yosi Avron and \linebreak Shlomi Hillel for fruitful discussions.
This research was partly supported by the NCCR SwissMAP, funded by the Swiss National Science Foundation.

\end{document}